\documentclass[runningheads]{llncs}
\usepackage[T1]{fontenc}

\usepackage{graphicx}

\usepackage[textsize=small]{todonotes}
\newcounter{todocounter}

\usepackage{extarrows}

\usepackage{enumitem}
\setlist{nosep}

\raggedbottom

\usepackage[algoruled,vlined,linesnumbered,titlenotnumbered,noend]{algorithm2e}
\usepackage{amsmath}
\usepackage{microtype}
\usepackage{amsfonts}
\usepackage{float}
\usepackage{calc}
\usepackage{wrapfig}
\usepackage{tikz}
\usetikzlibrary{automata,tikzmark,decorations.pathreplacing,petri,backgrounds,positioning,calc,trees,fit,arrows,shapes}
\tikzset{>=triangle 45}

\usepackage{graphicx}
\usepackage{caption}
\usepackage[normalem]{ulem}
\usepackage{hyperref}
\usepackage{tabularx}

\usepackage{bussproofs}
\usepackage{stackengine}
\usepackage{comment}
\usepackage{sidecap}
\usepackage[capitalize]{cleveref}
\usepackage{adjustbox}
\usepackage{pgfplots}
\pgfplotsset{compat=1.13}
\usepackage{pgfplotstable}
\usepackage{tabularx}
\usepackage{multicol}
\usepackage{multirow}
\usepackage{listings}
\usepackage{mathrsfs}

\usepackage{xspace}

\lstdefinestyle{tinyc}{
    basicstyle=\scriptsize\ttfamily,
    keywordstyle=\color{blue}
  }

  \lstdefinestyle{normalc}{
    basicstyle=\ttfamily,
    numbers=none,
    keywordstyle=\color{blue}
  }

  \lstdefinestyle{inlinec}{
    basicstyle=\ttfamily
  }

  \lstset{
  xleftmargin=5.0ex,
    language=C,
    tabsize=2,
    numbers=left,
    numberstyle=\tiny,
    numbersep=1em,
    morekeywords={spawn,global,begin,end,delete},
    commentstyle=\color[rgb]{0,0.5,0.2},
    escapeinside={(*}{*)},
    literate={<=}{{$\leq$}}1 {>=}{{$\geq$}}1 {<>}{{$\neq$}}1
    {==>}{{$\Rightarrow$}}1 {forall}{{$\forall$}}1
    {equal}{{$=$}}1 {_1}{{$_1$}}1
    {<==>}{{$\Leftrightarrow$}}1 {lnot}{{$\lnot$}}1 {blank}{{$\ $}}1
    {&&}{{$\land$}}1,
    style=inlinec
  }

\newcommand{\NP}{\text{\sc NP}}


\newcommand{\bjnote}[1]{}
\newcommand{\bjforget}[1]{}
\newcommand{\hasbeenremoved}[1]{}



\newcommand\CDSChecker{\textsc{CDSChecker}\xspace}
\newcommand\GenMC{\textsc{GenMC}\xspace}
\newcommand\RCMC{\textsc{RCMC}\xspace}
\newcommand\Nidhugg{\textsc{Nidhugg}\xspace}
\newcommand\Droidracer{\textsc{Droidracer}\xspace}




\newcommand{\true}{\ensuremath{\mathtt{true}}}
\newcommand{\false}{\ensuremath{\mathtt{false}}}

















\newcommand{\post}{\texttt{post}}


\newcommand{\po}{\ensuremath{\color{red}\mathtt{po}\color{black}}}
\newcommand{\mo}{\ensuremath{\color{red}\mathtt{mo}\color{black}}}
\newcommand{\rf}{\ensuremath{\color{green}\mathtt{rf}\color{black}}}
\newcommand{\co}{\ensuremath{\color{orange}\mathtt{co}\color{black}}}
\newcommand{\fr}{\ensuremath{\color{RoyalBlue}\mathtt{fr}\color{black}}}
\newcommand{\pb}{\ensuremath{\color{violet}\mathtt{pb}\color{black}}}

\newcommand{\hb}{\ensuremath{\color{blue}\mathtt{hb}\color{black}}}

\tikzset{event/.style={draw=none,fill=none,align=left}}
\tikzset{arc/.style={->,>=stealth,thick}}
\tikzset{arcDerived/.style={->,>=stealth,thick,dashed}}





\makeatletter
\newcommand{\iffref}[3]{\@ifundefined{r@#1}{#3}{#2}}
\makeatother


\def \rf{\mathsf{rf}}
\def \po{\mathsf{po}}
\def \eo{\mathsf{eo}}
\def \co{\mathsf{co}}
\def \pb{\mathsf{pb}}
\def \mo{\mathsf{mo}}
\def \fr{\mathsf{fr}}
\def \hb{\mathsf{hb}}
\def \dto{\mathsf{qo}}
\def \eodag{\mathsf{eo}^{ \dagger}}

\def\mP{\mathcal{P}}


\def\ttx{\texttt{x}}

\def\sharevar{X}

\def\get{\texttt{get}}
\def\post{\texttt{post}}
\def\val{\texttt{val}}
\def\true{\texttt{true}}
\def\false{\texttt{false}}



\def \line{\texttt{line}}
\def \succ{\texttt{succ}}
\def \inst{\texttt{inst}}

\newcommand{\tuple}[1]{\langle#1\rangle}

\def\msR{\mathscr{R}}
\def\msW{\mathscr{W}}
\def\msRx{\msR_{\ttx}}
\def\msWx{\msW_{\ttx}}



\def\adt{\mathsf{MB}}

\def\init{\mathsf{init}}

\def\dsdomain{\mathscr{B}}

\def\rels{\mathsf{rels}}

\newcommand{\nodeLen}{0.5cm} 
\newcommand{\nodeWidth}{0.5cm} 
\newcommand{\nodeGap}{2cm} 
\newcommand{\nodehGap}{3cm} 
\newcommand{\nodevGap}{1.5cm} 



\newcommand{\ewriteEvent}[5]{
    \node[align=left] (#1) [minimum width=\nodeWidth, minimum height=\nodeLen] 
        at (#2) {$#1=$\\$ \langle #3, \text{W}, #4, #5 \rangle$};
}

\newcommand{\ereadEvent}[4]{
    \node[align=left] (#1) [minimum width=\nodeWidth, minimum height=\nodeLen] 
        at (#2) {$#1=$\\$\langle #3, \text{R}, #4 \rangle$};
}
\newcommand{\postEvent}[4]{
    \node (#1) [minimum width=\nodeWidth, minimum height=\nodeLen] 
        at (#2) {$\langle #3, \text{post}, #4 \rangle$};
}

\newcommand{\epostEvent}[4]{
    \node[align=left] (#1) [minimum width=\nodeWidth, minimum height=\nodeLen] 
        at (#2) {$#1=$\\ $\langle #3, \text{post}, #4 \rangle$};
}

\newcommand{\getEvent}[3]{
    \node (#1) [minimum width=\nodeWidth, minimum height=\nodeLen] 
        at (#2) {$\langle #3, \text{get} \rangle$};
}

\newcommand{\egetEvent}[3]{
    \node[align=left] (#1) [minimum width=\nodeWidth, minimum height=\nodeLen] 
        at (#2) {$#1=$ \\ $\langle #3, \text{get} \rangle$};
}

\newcommand{\drawPo}[2]{
    \draw[->] (#1.east) -- (#2.west) node[midway, above] {$\po$};
}
\newcommand{\vdrawPo}[2]{
    \draw[->] (#1.south) -- (#2.north) node[midway, right] {$\po$};
}


\newcommand{\drawPb}[2]{
    \draw[->] (#1.east) -- (#2.west) node[midway, above] {$\pb$};
}


\sloppy

\title{Checking Consistency of Event-driven Traces}

\titlerunning{Checking Consistency of Event-driven Traces}

\author{Parosh Aziz Abdulla\orcidID{0000-0001-6832-6611} \and
Mohamed Faouzi Atig\orcidID{0000-0001-8229-3481} \and
R. Govind\thanks{Corresponding author}\orcidID{0000-0002-1634-5893}\and
Samuel Grahn\orcidID{0009-0004-1762-8061}
\and
Ramanathan S. Thinniyam\orcidID{0000-0002-9926-0931}}

\authorrunning{P. Abdulla, M. Atig, R. Govind, S. Grahn, R. S. Thinniyam}

\institute{Uppsala University, Sweden \email{\{parosh.abdulla,mohamed\_faouzi.atig,govind.rajanbabu,\\samuel.grahn,ramanathan.s.thinniyam\}@it.uu.se}}

\begin{document}

\maketitle              
  
 
\begin{abstract}
  Event-driven programming is a popular paradigm where the flow of execution is controlled by two features: (1) shared memory and (2) sending and receiving of messages between multiple \emph{ handler threads} (just called handler). Each handler has a mailbox (modelled as a queue) for receiving messages, with the constraint that the handler processes its messages sequentially. Executions of messages by different handlers may be interleaved.   
  A central problem in this setting is checking whether a candidate execution is \emph{consistent} with the semantics of event-driven programs.  In this paper, we propose an axiomatic semantics for event-driven programs based on the standard notion of \emph{traces} (also known as execution graphs). We prove the equivalence of axiomatic and operational semantics. This allows us to rephrase the consistency problem axiomatically, resulting in the \emph{event-driven consistency problem}: checking whether a given trace is consistent. We analyze the computational complexity of this problem and show that it is NP-complete, even when the number of handler threads is bounded. We  then identify a tractable fragment: in the absence of nested posting, where handlers do not post new messages while processing a message, consistency checking can be performed in polynomial time. Finally, we implement our approach in a prototype tool and report on experimental results on a wide range of  benchmarks.

  \keywords{Event-driven programs \and Consistency-checking \and Verification.}
\end{abstract}

\section{Introduction}

Event-Driven (ED) programming has emerged as a powerful paradigm for building scalable and responsive systems capable of handling a large number of user interactions concurrently~\cite{CunninghamK05,libevent,libasync,mace,GayLBWBC03,HillSWHCP00,apple,microsoft}. It is widely used across various domains, including file systems~\cite{Mazieres01}, high-performance servers~\cite{Dabek:event-driven-02}, systems programming~\cite{P:pldi13}, and smartphone applications~\cite{mednieks2012programming}.
Event-driven programs have become so common that they are  considered a core topic under \emph{Programming Fundamentals} according to IEEE and ACM computing curricula \cite{EDpedagogy2021}.  
ED programming extends multi-threaded shared-memory programming through the use of messages, thus using both shared-memory as well as message-passing.

Verification of ED programs, in addition to the usual challenges associated with shared-memory multi-threaded program verification,  has to deal with the non-determinism introduced by the sending and receiving  of messages between multiple \emph{ handler threads} (just called handler). Each handler has a mailbox (modelled as FIFO queue, following ~\cite{maiyaPartialOrderReduction2016,Event-DrivenSMC-OOPSLA-15}) for receiving messages, with the constraint that the handler processes its messages sequentially. Executions of messages by different handlers may be interleaved.  
A well-established technique for verifying multi-threaded programs is stateless model checking (SMC)\cite{Godefroid:popl97}, which has proven effective for detecting concurrency bugs. SMC has been implemented in several tools - including VeriSoft\cite{Godefroid:verisoft-journal}, \textsc{Chess}\cite{MQBBNN:chess}, Concuerror\cite{Concuerror:ICST13}, \Nidhugg~\cite{DBLP:journals/acta/AbdullaAAJLS17}, rInspect~\cite{DBLP:conf/pldi/ZhangKW15}, \CDSChecker~\cite{NoDe:toplas16}, \RCMC~\cite{KLSV:popl18}, and \GenMC~\cite{GenMC-CAV-21} - and applied to realistic programs~\cite{GoHaJa:heartbeat,KoSa:spin17}.
To efficiently explore execution traces, SMC tools often employ dynamic partial order reduction (DPOR)~\cite{Valmari:reduced:state-space,Peled:representatives,CGMP:partialorder,Godefroid:thesis,DBLP:journals/pacmpl/AbdullaAJN18,DBLP:journals/acta/AbdullaAAJLS17,DBLP:conf/cav/AbdullaAJL16,DBLP:conf/popl/AbdullaAJS14}. DPOR avoids redundant exploration by recognizing and pruning equivalent executions. DPOR does this by exploring the space of all \emph{traces} (also called execution graphs~\cite{KLSV:popl18}).  Intuitively, a trace is a summary of the important concurrency information contained in a program execution, represented as a directed graph whose edges are  the union of certain relations (defined further below). 

A central component of DPOR  techniques is \emph{consistency checking} (e.g., see Section 5 of~\cite{Abdulla2019} and Section 4 of~\cite{Kokologiannakis2022}), which involves determining whether a candidate trace is realizable, i.e., whether there exists an execution of the program that respects all the relations implied by the trace. The consistency checking problem has been extensively studied on its own for different programming models, notably by Gibbons and Korach (for Sequential Consistency)~\cite{GibbonsK97} from 1997 and continued in several works (e.g., ~\cite{Bouajjani17,Tunc2023,Chakraborty2024}).

We refer to the consistency checking in the event-driven setting as the event-driven sequential consistency problem.
In this work, we  study the event-driven sequential consistency problem.  We first propose an axiomatic semantics for ED programs. We then establish the equivalence between the operational and axiomatic semantics of ED programs.
Next, we  explore the complexity landscape of the event-driven sequential consistency problem.

Concretely, we consider as input a trace represented by a set of events and relations among them. The goal is to determine whether this trace can arise from a valid event-driven execution. These relations include \emph{Program Order} (the order in which instructions are fetched from the code associated with the message), \emph{Read-From} relation (which relates each read to the write that it reads from), \emph{Coherence Order} (which specifies the order between  writes on the same variable).
The above relations already exist for general multi-threaded shared-memory programs. In addition, our traces contain the  \emph{Execution Order}   (which fixes the order in which the messages of the same handlers are executed), \emph{Message order} (ordering the posting of messages to the same handler) and \emph{Posted-by} (relating the instruction posting the message to the instruction starting the execution of the message). We  prove that event-driven consistency problem is NP-complete, \emph{even when the number of handler threads is bounded}. On the positive
side, we identify a tractable fragment: in the absence of nested posting - where handlers do not post new messages while processing a message - consistency checking can be performed in polynomial time. Finally, we
implement our approach in a prototype tool and report on experimental
results on a wide range of  event-driven benchmarks, both synthetic and from real-world event-driven programs.
\vspace*{0.2in}

\noindent\textbf{Related work.} Race detection in event-driven programs has been studied~\cite{raychevEffectiveRaceDetection2013,Maiya:pldi14}. There has also been work on partial order reduction in this setting~\cite{maiyaPartialOrderReduction2016,ATVA2023} and stateless model checking~\cite{Event-DrivenSMC-OOPSLA-15}. The paper~\cite{ATVA2023}  considers the consistency problem in the case of mailboxes modeled as multisets, showing its {NP}-hardness. In the specialised setting of ED programs for real-time systems, the work ~\cite{gantyAnalyzingRealTimeEventDriven2009} shows that checking  safety properties is undecidable. The robustness problem, which asks if a given an ED program has the same behaviour as if it were to be run on a single thread, has been studied in ~\cite{bouajjaniVerifyingRobustnessEventDriven2017}. Several efforts have been made to provide language support for ED programming such as Tasks \cite{fischerTasksLanguageSupport2007} as well as the P programming language ~\cite{P:pldi13}. In particular, P programming was built to provide safe asynchronous ED programming from the ground up and used to implement and validate the USB driver in Windows 8.  Finally, the consistency problem has been extensively studied for different programming languages (e.g., \cite{GibbonsK97,Bouajjani17,Tunc2023,Chakraborty2024,ATVA2023}). 
However, as far as we know, this is first time that the consistency problem is studied in the context of ED programs with FIFO queues as mailboxes. 

\section{Event-driven programs: Syntax and Semantics}~\label{sec:model}
In the following,  we will first  give the syntax of Event-driven (ED) programs. Then, we will  describe the  operational semantics of ED programs. Next,  we  will define the notion of traces of ED programs and give an equivalent axiomatic definition of traces. Finally,  we will  define the event-driven consistency problem.

\subsection{Syntax of Event-Driven Programs} 
\label{sub:syntax_of_event_driven_programs}

\begin{wrapfigure}[8]{R}{0.65\textwidth} 
  \centering      
  \vspace{-\baselineskip}
  \vspace{-0.5cm}
  \begin{adjustbox}{width=0.64\textwidth}
    \begin{tikzpicture}
    \node at (1,6) {$<prog> ::= \textbf{vars} \ <var>^{*} \  \textbf{handlers} \ <handler>^{*} \  \textbf{msgs} <msg>^*$};
    \node at (1,5.5) {$<handler>::= \ <handlerId> \textbf{regs} \  <reg>^{*} $};
    \node at (1.5,5) {$<msg> ::= <msg name> \ <inst>^* \ <label>: <last>$};
    \node at (1.5,4.5) {$<inst> ::= <label> : \ <stmt> \ $};
    \node at (1,4) {$<stmt> ::= <var> = <reg> | <reg> = <var> | <reg> = <exp>$};
    \node at (1.25,3.5) {$if <cond> \textbf{goto} :<label> | \textbf{goto}: <label> $};
    \node at (0.5,3) {$post(<handlerId>,<msg name>) $};  
  \end{tikzpicture}
  \end{adjustbox}
  \caption{Syntax of Event-Driven Programs}
  \label{fig:progSyntax}
\vspace{-\baselineskip}
\end{wrapfigure}

The syntax of  event-driven programs we consider is shown in \cref{fig:progSyntax}. An event-driven program $\mP$ has a finite set $H$ of \emph{handlers}\footnote{ED  programs often have designated handler threads with mailboxes as well as \emph{non-handler} threads which do not have an associated mailbox. However, we can think of a non-handler thread as a handler to which a message is never posted and hence simplify notation by assuming that all threads are handlers.}, each $h \in H$ having a finite set $R_h$ of \emph{local registers}. We denote $R=\bigcup_h R_h$. The handlers interact via a finite set $X$ of \emph{(shared) variables}, as well as via a finite set $M_h $ of \emph{messages} which are posted to the mailbox $b_h$ associated with each handler $h$. We assume that the local registers and shared variables take  values from a data domain $D$. The message sets of different handlers are assumed to be disjoint with $M=\bigcup_h M_h$ the set of all messages. 
Each message has a \emph{message name} and comprises of a sequence of instructions, ending in a special instruction $last$ which indicates the end of the message.

Each instruction consists of a unique \emph{label} followed by a \emph{statement}. A statement of the form $<var> = <reg>$ in the grammar indicates the writing of a register value into a shared variable. A statement of the form $<reg>=<var>$ indicates the read operation of a shared variable which is then stored in the local register of a handler. More complex manipulations of data domain values are assumed to be performed within handlers through the use of \emph{expressions} as in $<reg>=<exp>$. These expressions are assumed to only use  local registers. The $\textbf{goto}$ statement moves the program control to the indicated label, with conditional branching allowed using the $if$ construct.The condition $cond$ in the $if$ statement uses only the local registers of the handler which executes the statement. The post statement posts a message to the mailbox of the indicated handler. We assume that the labels in $\mP$ form a finite set $L$ and there is a successor function $\succ \colon L  \mapsto L$ which indicates the flow of program control. Each label $l \in L$ has an associated instruction $\inst(l)$ which is given by the function $\inst$. 

\subsection{Operational Semantics of Event-Driven Programs} 

\label{sub:semantics_of_event_driven_programs}

We now describe the operational semantics of ED programs, focusing on how handlers interact with mailboxes during the execution of an event-driven program.

\smallskip

\noindent \textbf{Handler.}
A handler $h$ repeatedly extracts a message from its mailbox, executes the code of the message to completion, then extracts another message and executes its code, and so on. This extraction is modelled as a $\get$ event.
We use a counter at each handler in order to generate unique message IDs. 
Note that execution of messages by different handlers could be interleaved.  
Further, while executing the code of a message by a handler, messages could be added to its mailbox.
The execution of a message is done one instruction at a time. 
At any point of time, a handler has at most one \emph{active message} which is being executed. An underlying \emph{nondeterministic scheduler} decides which handler to run at a step.

\smallskip

\noindent \textbf{Mailbox.}
A mailbox is a labelled transition system $\adt = \tuple{\dsdomain$, $\beta_\init$, $\{\get,\post\}$, $\Sigma$, $\rightarrow}$, where $\dsdomain$ is the set of configurations of $\adt$, $\beta_\init \in \dsdomain$ is the initial configuration, and $\Sigma$  is the set of messages (including a special symbol $\bot$). We assume that the operations that can be performed on $\adt$ are $\{ \get,\post\}$ and the transition relation $\rightarrow \subseteq \dsdomain \times \{ \get,\post\}\times \Sigma \times \dsdomain $ specifies the semantics of the operations. 
 In this paper, the operations are of two kinds: $\get$ which downloads a message from the mailbox, and $\post$ which adds a new message into the mailbox.   Since the mailbox is modelled as a FIFO queue and follows the first-in-first-out semantics, we have  $\dsdomain=\Sigma^*$ and $\beta_\init=\varepsilon$. We write $\beta \xrightarrow{o,\sigma}\beta'$ to denote that the message $\sigma$ is returned by (resp. posted by) the operation $o$ if it is a $\get$ (resp. $\post$) and $\beta=\beta' \cdot \sigma$ (resp. $\beta'=\sigma \cdot \beta'$), while transitioning from configuration $\beta$ to configuration $\beta'$. 
A \emph{run} $\rho=\beta_\init \xrightarrow{o_1, \sigma_1}\beta_1 \xrightarrow{o_2, \sigma_2} \ldots \beta_n$ of $\adt$ is a finite sequence of transitions starting from the initial configuration $\beta_\init$.

 \smallskip

\noindent \textbf{Configuration of ED-programs.}
 We use $h,g, \ldots$ for handlers, $m$ for messages, x,y,z for shared variables, and a,b,c for local registers. Members of $D$ will be denoted by $v$. Configurations of programs and mailboxes will be denoted by $\alpha$ and $\beta$ respectively. 
 The local state $s_h=\tuple{\val,\beta,\line,mid,mcount}$ of a handler $h$ is a tuple containing the valuation $\val \colon R_h \mapsto D$ of its local registers, the configuration of its mailbox $\beta$, $\line$ which is the label of the next instruction that will be executed by the handler, the message id $mid$ of its currently active message and a counter $mcount$. The valuation function $\val$ is extended to expressions in the standard way. When a message $m$ is to be posted by handler $h$ to the mailbox of handler $h'$, the local counter $mcount$ is incremented by 1. A unique mid  $(h,mcount)$ is generated and associated with the message instance. 
We will write $s_h.\val, s_h.\beta$ etc to denote the components of $s_h$. Given a particular message name $m$, let $m.\init$ denote the label of  the first instruction in the code of  $m$. 

A configuration $\alpha=(\{s_h \mid h \in H\},\nu)$ of a program consists of the local state of each handler $h$ along with a valuation $\nu \colon X \mapsto D$ of the shared variables. We sometimes write $\alpha.\nu$, $\alpha.s_h$ etc to denote the valuation of global variables and the local state of handler $h$ respectively, in configuration $\alpha$. 
A program $\mP$ starts in some initial configuration\footnote{Note that we intentionally do not specify the initial values of local registers and  shared variables, or the initial message each handler should execute, as the event-driven consistency problem treats the program code as a black box. However, our framework  can be easily extended to take into account  such initial conditions.} $\alpha_0=(\{s_h^0 \mid h \in H\},\nu)$ which satisfies the condition that for each handler $h$, we have $s_h^0.\line=m.\init$ for some $m \in M_h$ i.e. the execution of some message is initialised in each handler. Note that there is no post event associated with this initialization. Furthermore, the mailboxes are all empty i.e. $s_h^0.\beta=\beta_{\init}$ for all $h$, $s_h^0.mid=(h,0)$  and $s_h^0.mcount=1$.

\begin{figure}[h]
  \vspace{-\baselineskip}
\flushleft\textsc{Write}\\
\AxiomC{$\inst(\alpha.s_h.\line)=l_i \colon x=a \wedge \alpha.s_h.\val(a)=v$}
\UnaryInfC{$\alpha \xrightarrow{\tuple{h,write,x,v}} \alpha(\nu \leftarrow \nu(x\leftarrow v), s_h.\line \leftarrow \succ(s_h.\line))$ }
\DisplayProof
\flushleft\textsc{Post}\\
\AxiomC{$\alpha.s_h.\line=l\colon post(h',m)$\;\;$\alpha.s_{h'}.\beta \xrightarrow{\post,(m,newmid)} \beta'$\;\;$newmid=(h,\alpha.s_h.mcount)$}
\UnaryInfC{$\alpha \xrightarrow{\tuple{h,\post,h',newmid}} \alpha(s_{h'}.\beta \leftarrow \beta',s_h.mcount \leftarrow s_h.mcount +1, s_h.\line \leftarrow \succ(s_h.\line))$ }
\DisplayProof
 \flushleft\textsc{Read}\\
  \AxiomC{$\inst(\alpha.s_h.\line)=l\colon a=x$}
  \UnaryInfC{$\alpha \xrightarrow{\tuple{h,read,x}} \alpha(s_h.\val \leftarrow s_h.\val(a\leftarrow \nu(x)), s_h.\line \leftarrow \succ(s_h.\line))$ }
  \DisplayProof
\flushleft\textsc{Get}\\
\AxiomC{$\alpha.s_h.\line=l \colon last$ \;\;$\alpha.s_h.\beta_h \xrightarrow{\get,(m,mid)}_\adt \beta'$}
\UnaryInfC{$\alpha \xrightarrow{\tuple{h,\get,mid}} \alpha(s_h.\beta \leftarrow \beta', s_h.mid \leftarrow mid,s_h.\line \leftarrow m.\init)$ }
\DisplayProof
  \caption{A subset of  transition rules of ED  programs with $\alpha=(\{s_h \mid h \in H\},\nu)$. }
  \label{fig:transRules}
\end{figure}

A transition $\alpha \xrightarrow{a} \alpha'$, between two configurations $\alpha$ and $\alpha'$, occurs on either the execution of an instruction or a $\get$ operation. The subset of rules dictating transitions relevant to concurrency is shown in \cref{fig:transRules}.  The other rules i.e. local transitions within a thread, can be found in \cref{app:figtransRules} in Appendix~\ref{sec:app-model}. Given a  tuple/mapping $f$, we use $f(x \leftarrow d)$ to denote the tuple/mapping $f'$ which agrees with $f$ on all parameters except $x$, on which it takes the value $d$. We write $f(x \leftarrow d)$ (resp. $f(x_1 \leftarrow d_1, \ldots, x_n \leftarrow d_n)$) instead of  $f( g \leftarrow g(x \leftarrow d))$  (resp. $f(x_1 \leftarrow d_1) \cdots  (x_n \leftarrow d_n)$) when it is clear from the context.

An \emph{execution sequence} or run $\rho$ of program $\mP$ is a finite sequence of transitions $\alpha_0 \xrightarrow{a_1} \alpha_1 \xrightarrow{a_2} \ldots \xrightarrow{a_n} \alpha_n$ 
starting with an initial configuration $\alpha_0$.

\subsection{Events, Traces and Axiomatic Consistency} 
\label{sub:axiomatic_consistency}
In this subsection, we introduce an axiomatic semantics for ED programs. We first define the relevant types of events and then formalize the notion of a trace.

\paragraph{Events} 
\label{par:events}
An event is a collection of information about a transition that is meant to be made visible. Transitions which contain such information are called event-transitions (observe that local-transitions  do not have corresponding events). As shown in \cref{fig:transRules}, there are four types of event-transitions: reads, writes, posts and gets. The event is obtained from the event-transition by dropping the mid and newmid information. Note that newmid is basically a newly created mid written this way for clarity. Formally, 

\begin{itemize}
    \item A $\mathsf{write}$ event is a tuple $e=\tuple{h,write,x,v}$ which denotes the writing of the value $v$ by handler $h$ into global variable $x$. We say $e.var=x$ and $e.val=v$. We denote by $\msWx$ the set of all write events on $x$.
    
    \item A $\mathsf{read}$ event is a tuple $e=\tuple{h,read,x}$ which denotes the reading of the value stored in global variable $x$ by handler $h$. We say $e.var=x$. We denote by $\msRx$ the set of all read events on $x$.
    
    \item A $\post$ event is a tuple $e=\tuple{h,post,h'}$ which denotes the posting of a message by handler $h$ to the mailbox of the handler $h'$. We write $e.sender$ to denote $h$ and $e.receiver$ to denote $h'$.

    \item A $\get$ event is a tuple $e=\tuple{h,\get}$ which denotes the downloading of a message by handler $h$. 
\end{itemize}

In general, we write $e.h$ to denote the handler on which a message is being executed. In particular, $e.h$ is the same as $e.sender$ for a post event. Given a transition $\alpha \xrightarrow{a} \alpha'$ we write $e(a)$ for the event corresponding to $a$ if $a$ is an event-transition. In case the transitions are indexed e.g. $a_i$ then we just write $e_i$ instead of $e(a_i)$. For an event $e$, we denote by $e.type$ the \emph{type} of the event i.e. whether it is a read, write, $\post$ or $\get$. 
Note that unless necessary, we omit handler identifiers from events for readability.

\paragraph*{Traces} 
  Let $\rels = \{ \rf, \co, \po, \eo,  \pb, \mo\}$ be a set of relation names. A trace is a directed graph $\tau =(E, \Delta)$ where $E$ is a finite set of events, $\Delta \subseteq E \times \rels \times E$ is a set of edges on $E$ with labels from $\rels$. 
Let $E_h=\{e \mid e.h =h\}$ be the  set of events occurring on handler $h$. Let $G_h=\{e \mid e.type=get\} \cap E_h$ and $P_h=\{e \mid e.type =post \wedge e.receiver=h\}$ be respectively the get events of  handler $h$ and the post events to  $h$.
  The following  conditions are satisfied by $\Delta$: 
\begin{enumerate}
  \item[$\rf$:] (reads-from) maps each read instruction to a write instruction. 
  For each $x \in \sharevar$ and each $e \in \msRx$, there exists exactly one $e' \in \msWx$ such that $e'\; \rf \; e$.

  \item[ $\po$:] (program order) is 
   a union of total orders on the set of events $E_h$ which occur on a particular handler.
    This is a total ordering on all events which happen as part of the execution of a particular message instance.  
Formally,  we have
\begin{itemize}
\item For any $e \in E_h\setminus G_h$,  there exists at most one event $e' \in G_h$ s.t. $e' \; \po \; e$. 

\item For every  $e, e'' \in E_h\setminus G_h$ such that $e' \; \po \; e$ and $e' \; \po \; e''$ for some $e' \in G_h$, it is the case that either $e \;  \po \; e''$ or $e'' \; \po  \; e$. 

\item Let $E'_h$ be the set of events $e \in E_h\setminus G_h$ such that there is no event $e' \in G_h$ with $e' \; \po \; e$. Then, $\po$ is a total order over $E'_h$. Furthermore, for every $e \in E'_h$  and  $e' \in G_h$, we have  $e \; \po \; e'$. This means that all  events of the initial  message are ordered before the events of any other message.
   \end{itemize}
   
  \item[ $\co$:] (coherence order) For each pair of writes $e,e' \in \msWx$, either $e \;\co\; e'$ or $e'\; \co \;e$. 
  Note that $\co$ is a total order on the set $\msWx$ for each $x \in \sharevar$.

  \item[$\eo$:] (execution order)  is 
  a total order on the set of get events occurring on a handler. Let $e,e' \in G_h$ for some $h$. Then either $e \; \eo  \; e'$ or $e' \;  \eo \; e$.

 \item [ $\pb$:] (posted by) is 
 a relation which relates each get event on a handler to the corresponding post event. In other words, it is a bijection between the sets $P_h$ and $G_h$: for each $e \in G_h$ there is exactly one $e'  \in P_h$ such that $e' \; \pb \; e$. 
  \item[ $\mo$:] (message order)
  orders the events that posts messages to the mailbox of a  particular handler. For every $e,e' \in P_h$ either $e \ \mo \ e'$ or $e' \ \mo \ e$. 

\end{enumerate}
 A partial trace is a subgraph of a  trace.  
A partial trace $\tau'=(E', \Delta')$ is said to \emph{extend} a partial trace $\tau =(E, \Delta)$ if $E \subseteq E', \Delta \subseteq \Delta'$. A linearization $\pi=(E,\leq_\pi)$ of a partial trace $\tau=(E,\Delta)$ is a total ordering $\leq_\pi$ satisfying $\delta \in \Delta \Rightarrow \delta \in \leq_\pi$.

\paragraph*{Traces of Programs} Given a program $\mP$ and its execution $\rho =\alpha_0 \xrightarrow{a_1} \alpha_1 \xrightarrow{a_2} \ldots \xrightarrow{a_n} \alpha_n$, we define the set $E(\rho)=\{ e_i \mid  \text{  $a_i$ is an event-transition} \}$ to be the \emph{event set of $\rho$}. Clearly $\rho$ induces a total order $\leq_{\rho}$ on $E(\rho)$ defined in the natural way: $e_i \leq_{\rho} e_j$ iff $i \leq j$. 
Each $\get$ event-transition specifies an mid for the message instance which is obtained from the mailbox. The execution of this message instance may contain more event-transitions later in $\rho$. Hence we extend the notion of mid to all non-$\get$ event-transitions in the following way. 
For each event $e_i$ which is a not a get event, let $e_j$ be the first  preceding get event $e_j$ in the order $\rho$ such that $e_i.h=e_j.h$, if such an event exists. We  assign the message id $a_j.mid$ to the transition $a_i$, since by the event-driven semantics, only one message can be executed by a handler at any point in time. If no such  get event exists, then we assign message id $(h,0)$ to $a_i$. Note that a post event has both an mid from the message it is part of as well as a newmid for the message it is creating.

Recall that for  $x \in \sharevar$, we have $\msR_{x}=\{ e \in E \mid e.type=read, e.var=x \}$, $\msW_{x}=\{ e \in E \mid e.type=write, e.var=x \}$. The event set $E$ together with the total order $\leq_\rho$ derived from a run \emph{induces} a  trace $\tau(\rho)$ in the following way:
\begin{enumerate}
\item[$\rf$:] If $e_i \; \leq_\rho e_j$ where $e_i \in \msWx, e_j \in \msRx$, and for all $e_i \leq_\rho e_k \leq_\rho e_j$ we have $ e_k \not \in \msWx$, then $e_i \; \rf \; e_j$.
\item[$\co$:] If $e_i,e_j \in \msWx$ for some $x$ and $e_i \; \leq_\rho \; e_j$ then $e_i \; \co \; e_j$.
  \item[$\po$:]If $e_i,e_j$ are such that $a_i.mid=a_j.mid$ and $e_i \; \leq_\rho \; e_j$ then $e_i \; \po \; e_j$.
  Further, if $e_i,e_j$ are such that $e_i.h=e_j.h$, $a_i.mid=(h,0)$ and $a_j.mid \neq a_i.mid$  then $e_i \; \po \; e_j$.
  \item[$\eo$:] If  $e_i \; \leq_\rho \; e_j$ satisfies $e_i.type=e_j.type=\get$ and $e_i.h=e_j.h, a_i.mid \neq a_j.mid$ then $e_i \;\eo \; e_j$. 
  \item[$\pb$:] If $e_i \; \leq_\rho \; e_j$ satisfies $e_i.type=post, e_j.type=get$ and $a_i.newmid=a_j.mid$ then $e_i \; \pb \; e_j$ 
  \item[$\mo$:]If $e_i,e_j$ are both post events such that $e_i.receiver=e_j.receiver$ and $e_i \; \leq_\rho \; e_j$ then $e_i \; \mo \; e_j$.
\end{enumerate}
The definition of $\rf$ ensures that every read on a variable reads from the latest write on that variable. The coherence order $\co$ is just the sequence of writes on a variable. The mid information tells us which set of instructions belong to the same message from which we can infer $\po$. Similarly the mid information also tells us which $\get$ is posted-by which $\post$. The sequence of message executions on a handler ($\eo$) and the order in which messages were posted to a handler ($\mo$) can be inferred from $\leq_\rho$. This gives us the following lemma: 

\begin{lemma}
\label{lem:progTraceAbstract}
  For any program $\mP$ and its execution $\rho$, $\tau(\rho)$ is a trace.
\end{lemma}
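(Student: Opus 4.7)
The plan is to unfold the definition of $\tau(\rho)=(E(\rho),\Delta)$ and verify, case by case, that $\Delta$ satisfies the six axioms listed for $\rf$, $\po$, $\co$, $\eo$, $\pb$, and $\mo$. Since $\leq_\rho$ is a total order on $E(\rho)$ and each axiom is a closure or totality property of a specific subset of events, the verification is mostly structural. I would first observe that the operational semantics assigns at most one event per event-transition, so $E(\rho)$ is finite. Then, for each relation I would argue that the induced edges are well-defined and satisfy the stated requirements.

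For $\co$, $\eo$, and $\mo$, the argument is immediate: each of these relations is defined by restricting $\leq_\rho$ to a particular subset of events (writes on $x$, get events on $h$, and posts to $h$, respectively), and restricting a total order yields a total order on that subset. The $\rf$ case follows because for any read $e_j\in\msRx$, the set of write events in $\msWx$ strictly preceding $e_j$ in $\leq_\rho$ is finite and linearly ordered, so it has at most one maximum, which is precisely the $\rf$-predecessor (I would flag here the question of reads with no preceding write, which requires either an implicit initial write or a mild restriction on runs).

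The $\po$ and $\pb$ cases require appeal to the operational semantics. For $\pb$, the \textsc{Get} rule enforces that a $\get$-event on $h$ consumes a message from $s_h.\beta$ that must have been enqueued by a previous \textsc{Post} rule firing (by FIFO, with a matching $newmid$), so for every $e_j\in G_h$ there is a unique $e_i\in P_h$ with $a_i.newmid=a_j.mid$; this gives the required injection (and, modulo posts that remain unconsumed, the bijection). For $\po$, I would use the fact that the mid assigned to non-$\get$ event-transitions is inherited from the most recent preceding $\get$ on the same handler (or $(h,0)$ if none exists). Since $\leq_\rho$ is total, events sharing a common mid form a contiguous sub-order, so $\po$ restricted to them is a total order; the three bullets can then be checked: (i) the unique $\get$ preceding $e$ in $\po$ is the $\get$ whose mid equals $a_e.mid$; (ii) totality within a message block follows from totality of $\leq_\rho$; (iii) events of the initial message (mid $(h,0)$) strictly precede all other events on $h$ by construction of the mid assignment, together with the initial configuration condition $s_h^0.line=m.init$.

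The main obstacle I anticipate is bookkeeping around the initial message and the mid assignment: one must show that the inherited mid is well-defined and consistent with the operational rules (so that all events of a single message instance really do share the same mid and no other events do), and that the $\get$ event and its subsequent events form exactly the expected block. Once this is established, the remaining verifications reduce to short checks that extracting a totally ordered subset, or a per-element minimum, from $\leq_\rho$ yields the axiomatic property. Apart from this, the proof is a routine induction on the length of $\rho$ (or equivalently, a direct argument on the finite total order $\leq_\rho$), so I would not expect significant technical surprises.
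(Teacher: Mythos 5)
Your proposal matches the paper's approach: the paper justifies this lemma only by a short informal paragraph observing that each of the six relations induced from $\leq_\rho$ satisfies the corresponding axiom essentially by construction, which is exactly the case-by-case verification you carry out in more detail. If anything, your version is more careful than the paper's, since you explicitly flag the edge cases (a read with no preceding write on the same variable, and a post whose message is never consumed by a get) that the paper's trace axioms for $\rf$ and $\pb$ formally require but whose treatment the paper defers to a remark about non-well-formed traces.
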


\begin{remark}
 Recall that a handler processes a message in its entirety before accessing the next message from its mailbox. Hence all events belonging to one message of $h$ must occur before all events of 
 all subsequent messages of $h$. Further, note that since the mailbox is a FIFO queue, then another requirement is that the order in which messages are removed from the mailbox needs to respect the order in which they are added to the mailbox.
 In this case, $\mo$ is a total ordering on all events that post messages to the same handler.   
The restriction here is that the order in which messages are extracted should be according to the $\mo$ between the events that posts the messages to the same mailbox. 
\end{remark}

\paragraph*{Axiomatic Consistency}
We introduce conditions under which a  trace $\tau$ is said to be \emph{axiomatically consistent} and show that this happens iff $\tau$ can be derived from the run of some event driven program. The conditions are given as is standard by means of acyclicity of the union of relations. 
To this end, we introduce new relations: The  \emph{queue order} $\dto$ is defined  as $\dto=\pb^{-1}.\mo.\pb$.
 The from-reads relation be defined as $\fr=\rf^{-1}.\co$. Let $\eodag= (\po^{-1})^*.\eo.\po^*$.

\begin{definition}
  \label{defn:axCons}
  A trace $\tau$ is said to be axiomatically consistent if the happens-before relation $\hb=(\po \; \cup \; \rf  \; \cup \; \fr  \; \cup \; \co  \; \cup \; \pb  \; \cup \; \mo  \; \cup \; \eodag  \; \cup \; \dto )$ is acyclic.
\end{definition}

\begin{theorem}
  \label{thm:consEq}
  A trace $\tau$ is axiomatically consistent iff there exists an event-driven program $\mP$ and a run $\rho$ such that $\tau=\tau(\rho)$. 
\end{theorem}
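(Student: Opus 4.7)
The statement is an ``iff'', so I will handle the two directions separately. The soundness direction ($\tau(\rho)$ is axiomatically consistent) is essentially bookkeeping: I unfold each relation of $\hb$ and show that each is contained in the total order $\leq_\rho$ induced by the run $\rho$. For $\po$, $\co$, $\eo$, $\mo$, $\pb$ this is immediate from the construction of $\tau(\rho)$. For $\fr = \rf^{-1}\cdot\co$ the standard argument works: if $e\,\fr\,e''$ via some write $e'$, and $e''\leq_\rho e$, then $e' \leq_\rho e''\leq_\rho e$, contradicting that $e'$ is the latest write on $e.var$ before $e$. For $\dto = \pb^{-1}\cdot\mo\cdot\pb$, FIFO semantics of the mailbox guarantees that gets are linearized in the same order as their posts. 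Finally, for $\eodag = (\po^{-1})^{*}\cdot\eo\cdot\po^{*}$, the operational rule forces a handler to finish a message entirely before executing a $\get$, so any $e$ in the message started by $g_1$ satisfies $e\leq_\rho g_2$ whenever $g_1 \,\eo\, g_2$. Hence $\hb\subseteq \leq_\rho$, which is a total order, so $\hb$ is acyclic.

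For the completeness direction, assume $\hb$ is acyclic and fix any linearization $\pi=(E,\leq_\pi)$ of $\hb$. I will construct the program $\mP$ canonically: the set of handlers is $\{h \mid \exists e \in E, e.h = h\}$; for each handler $h$ and each ``message instance'' (the initial one, plus one per get event in $G_h$) I create a message whose code is exactly the $\po$-ordered sequence of events belonging to that instance, with reads/writes/posts faithfully transcribed (values written are $e.val$, posts target $e.receiver$, and reads just load the shared variable into a fresh register). The run $\rho$ is then obtained by executing transitions in the order dictated by $\pi$.

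The crux is verifying that this $\rho$ is a legal run of $\mP$ and that $\tau(\rho)=\tau$. Three things need to be checked. First, reads must observe the correct values: if $e'\,\rf\,e$ and some other write $e''$ on $e.var$ lay between $e'$ and $e$ in $\pi$, then either $e''\,\co\,e'$ (contradicting $\co\subseteq\hb\subseteq\leq_\pi$) or $e'\,\co\,e''$, which yields $e\,\fr\,e''$ and hence $e\leq_\pi e''$, again a contradiction. Second, a $\get$ event $g$ must find the expected message at the head of the mailbox: the post $e'$ with $e'\,\pb\,g$ has already occurred (since $\pb\subseteq\hb$), and for every competing post $e''$ with the same receiver, $\mo$ together with $\dto = \pb^{-1}\cdot\mo\cdot\pb$ forces either $e''$ to have been consumed already or $e''$ to sit strictly behind the target message in the queue. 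Third, each handler is schedulable at the prescribed moment: between two consecutive $\get$s $g_1,g_2$ on handler $h$ in $\pi$, the events of the message started by $g_1$ must all appear, which follows from $\eodag\subseteq\hb$ preventing events of the $g_2$-message from being interleaved.

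Once these invariants are in hand, $\rho$ is well-defined, and reading off $\rho$ the derived relations $\rf,\co,\po,\eo,\pb,\mo$ reproduces $\Delta$ exactly, giving $\tau(\rho)=\tau$. The hard part, which I expect to absorb most of the proof's length, is the $\get$ case: one has to argue that the FIFO queue contents at the moment of $g$ correspond precisely to the $\mo$-suffix of posts whose $\pb$-images still lie in the $\pi$-future, which is exactly the combinatorial content of $\dto$ being included in $\hb$. The other cases are routine.
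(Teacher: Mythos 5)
Your proof is correct and follows essentially the same route as the paper's: one direction shows that every constituent of $\hb$ is contained in the total order $\leq_\rho$ induced by the run, and the other builds a canonical program from the trace, takes a linearization, and verifies by stepping through it that reads, $\get$s and handler scheduling are all enabled in that order, so that the induced relations reproduce the given ones. The only (favourable) difference is that you explicitly linearize the full $\hb$ (including the derived $\fr$, $\dto$ and $\eodag$) and explicitly check that each $\get$ finds the right message at the head of its FIFO queue, points the paper's prefix induction treats more briefly.
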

The proof of the theorem can be found in Appendix \ref{app:sec:axSemanticsEqProof}. 

\subsection{Event-driven consistency problem}

Having defined both operational and axiomatic semantics, we now introduce the event-driven consistency problem, which asks whether a partial trace can be extended to an axiomatically consistent  trace.

We first recall a related problem in the non-event-driven setting, where only the relations $\po$ (program order), $\rf$ (reads from) and $\co$ (coherence order) are relevant. When all three are given, consistency checking is tractable. However, if only $\po$ and $\rf$ are provided (aka $\rf$-consistency), is known to be $\NP$-complete \cite{GibbonsK97}. 
In ED programs, we additionally deal with  $\pb$, $\mo$ and $\eo$. Given the NP-completeness of the rf-consistency problem, it is natural to ask about the complexity of ED-consistency where $mo$ and $eo$ are not provided\footnote{Note that the $\eo$ order should respect the $\mo$ order, since the mailbox is  a FIFO queue. When we talk about partial traces, we do not explicitly mention the relations present. The understanding is that both $\mo$ and  $\eo$ are missing. }.
As in the nonED case, our proof of equivalence in Theorem \ref{thm:consEq} implies that consistency is in polynomial time if all the relations are given. This leads to the well-motivated consistency problem for event-driven programs.
Let $\rels'=\{\po \cup \rf \cup \co  \cup \pb \}$

\begin{definition}[ED-Consistency Problem]
    \label{def:queueConsProblem}
     Given a partial trace $\tau'=(E',\Delta')$ with $\Delta' \subseteq E \times \rels' \times E$, decide whether there exists an axiomatically consistent extension trace  $\tau=(E,\Delta)$ of $\tau'$  such that $\Delta' \cap (E \times \rels' \times E)=\Delta \cap (E \times \rels' \times E)$.
\end{definition}

\section{Complexity of Event-driven Consistency}
\label{sec:hardness}

In this section, we study the complexity of the event-driven consistency problem. 
We show that the problem is $\NP$-hard.  
Further, since our reduction uses only 12 handlers, this also implies the hardness for the more restricted version of the problem, with only a bounded number of handler threads. 

The proof will follow from a reduction from \emph{3-Bounded Instance 3SAT} (3-BI-3SAT in short), a problem known to be {\NP}-complete~\cite{Tovey84}.
Further details and a proof of correctness are given in Appendix~\ref{app:hardness}.

\begin{definition}[3-BI-3SAT] 
\label{def:bddInstance3SATProblem}
       A 3-BI-3SAT  problem  is the Boolean satisfiability problem restricted to conjunctive normal form formulas such that: (1) each clause contain two or  three literals, (2)     each variable occurs in at most 3 clauses, and (3) each variable appears at most  once per clause. 
\end{definition}

\begin{theorem}~\label{thm:NP-hardness-bddhandler-queue}
        The ED-consistency problem is NP-complete for traces with at most 12 handlers. 
\end{theorem}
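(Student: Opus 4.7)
The proof will establish both NP membership and NP-hardness.

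For NP membership, observe that extending a partial trace $\tau'=(E',\Delta')$ to an axiomatically consistent trace requires specifying only the missing relations $\mo$ and $\eo$. These are total orders on posts to each handler (for $\mo$) and on gets at each handler (for $\eo$), hence of size at most quadratic in $|E'|$. A non-deterministic algorithm can therefore guess them, verify that the resulting graph satisfies the total-order conditions of a trace, and test acyclicity of $\hb$ via standard reachability, all in polynomial time.

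For NP-hardness, the plan is to reduce from 3-BI-3SAT. Given an instance $\phi$ with variables $x_1, \ldots, x_n$ and clauses $C_1, \ldots, C_m$, I would construct in polynomial time a partial trace $\tau'_\phi$ over a constant set of at most $12$ handlers such that $\phi$ is satisfiable iff $\tau'_\phi$ admits an axiomatically consistent extension. For each variable $x_i$, the variable gadget contains a pair of posts (to a dedicated handler) whose $\mo$-ordering encodes the truth value of $x_i$. For each clause $C_j$, a clause gadget uses $\rf$, $\co$, $\pb$, and $\po$ together with the chosen $\mo$/$\eo$ values to enforce that at least one literal is assigned true: a falsifying assignment forces a cycle in $\hb$ through the composition of $\mo$, $\dto$, and $\eodag$ with the fixed edges of $\tau'_\phi$, while any satisfying assignment admits a consistent $\mo$/$\eo$ extension. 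The $3$-boundedness of the instance ensures that each variable needs to propagate its value to at most three clauses and each clause gadget queries at most three variables, so a constant number of handlers suffices to route all information even as we reuse the same handlers across many variables and clauses, isolating different gadgets from one another via $\po$-chains inside a single message.

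The main obstacle will be designing each gadget to fit within the $12$-handler budget while preventing the variable and clause encodings from interfering with one another. Two design principles will be central. First, since many gadgets share handlers, each gadget's effect must be isolated using $\po$-chains and $\pb$-links, so that any cycle induced by a faulty assignment is confined to a single clause gadget and cannot be ``short-circuited'' through an unrelated one. Second, the FIFO coupling $\dto=\pb^{-1}\cdot\mo\cdot\pb$ must be arranged to enforce the intended clause constraints without imposing spurious constraints on the $\mo$-orderings that encode variable assignments; this requires careful placement of post and get events within each mailbox. Correctness of the reduction then amounts to showing that the axiomatically consistent extensions of $\tau'_\phi$ correspond exactly to the satisfying assignments of $\phi$, by a case analysis on the induced cycles of $\hb$.
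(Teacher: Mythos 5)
Your NP-membership argument is correct and matches the paper's (guess the missing $\mo$ and $\eo$, then check acyclicity in polynomial time), and your high-level plan for hardness — reduce from 3-BI-3SAT, encode each variable's truth value as the $\mo$-order of a pair of posts, and let clause gadgets force an $\hb$-cycle exactly when the clause is falsified — is the same strategy the paper follows. However, the hardness direction as written has a genuine gap: everything you flag as ``the main obstacle'' is precisely the content of the paper's proof, and you do not supply a mechanism for it. Two pieces are missing. First, you never say how the ordering information sitting in a mailbox is transferred to a clause gadget; the paper does this by \emph{sandwiching}, i.e., placing a read--write box in a clause handler whose $\rf$ edges to and from a specific message in $h_W$ force the box to execute strictly between two events of that message, so that the $\eo$-order of two messages induces an $\hb$ edge between boxes in different clause handlers. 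Without some such device the clause gadget has no way to observe the guessed $\mo$/$\eo$ at all.

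Second, and more seriously, your starting point — a \emph{single} pair of posts per variable whose order is then ``propagated to at most three clauses'' — is exactly the design the paper shows to be unworkable: if one pair of messages in $h_W$ must be sandwiched by boxes from every clause containing $x_i$, each such clause handler is blocked until all the others reach their corresponding boxes, and this cascades into needing an unbounded number of handlers. The paper therefore uses up to three copies $m_{i,j,0},m_{i,j,1}$ of the message pair, one per clause occurrence, and then must force all copies to encode the \emph{same} truth value. Achieving that synchronization with a constant number of handlers is the technical heart of the proof: it exploits the grid structure of the 3-BI-3SAT instance and routes each copy through a \emph{post sequence} of nested posts via auxiliary handlers $h_V,h_{t_1},\dots,h_{t_6}$, arranged so that FIFO mailbox semantics (via $\dto=\pb^{-1}\cdot\mo\cdot\pb$) preserves the order chosen at the first occurrence of the variable while still permitting all reorderings between messages of \emph{different} variables. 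Your proposal names the constraint (``the FIFO coupling must be arranged to enforce the intended clause constraints without imposing spurious constraints'') but gives no construction that meets it, so the reduction is not yet established.
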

The proof is done by reduction from the  3-BI-3SAT.
Let $\phi$ be a 3-BI-3SAT instance with variables $x_1,x_2,\ldots,x_n$ and clauses $C_1,C_2, \ldots, C_m$. We will construct a partial ED trace $\tau=(E,\Delta)$, with $\Delta \subseteq E \times \rels' \times E$, such that $\tau$ can be extended to a axiomatically consistent  trace  $\tau'=(E,\Delta')$, with  $\Delta' \cap (E \times \rels' \times E)=\Delta \cap (E \times \rels' \times E)$, iff $\phi$ is satisfiable.

\noindent\textbf{High level structure. }
The construction of the trace $\tau$ is divided into two stages, which we call Stage 1 and Stage 2 respectively. There are 8 handlers in Stage 1 and 5 handlers in Stage 2. One handler, namely $h_W$ is common to both stages, hence totally there are 12 handlers. If a satisfying assignment exists for  $\phi$, then there is a program which can execute the events in Stage 1 followed by those in Stage 2, i.e., $\tau$ is consistent. If $\phi$ is unsatisfiable then there is no witnessing execution possible which executes both stages and $\tau$ is inconsistent. 

\underline{Stage 1} corresponds to the selection of a satisfying assignment $f$ for $\phi$.  
 We can encode the information of whether a variable $x_i$ is assigned true or false using the order of execution of two messages $m_{i,1}$ and $m_{i,0}$ on the same handler, where $x_i$ is assigned true (resp. false) if $m_{i,1}$ (resp. $m_{i,0}$) is executed later. Unfortunately, this will not work due to technical difficulties faced in clause verification (see Remark \ref{rem:3copies} in Subsection \ref{sub:stage1}). 
This necessitates our extremely technical reduction which makes use of the structure of the 3-BI-3SAT instance where each variable occurs in \emph{at most} three clauses and the variables occurring in a clause are all different. We have to create (at most) 3 copies of the messages, one for each clause in which $x_i$ occurs and find a way to synchronise the assignment between these three copies. 

Hence the messages for $x_i$ are actually of the form $m_{i,j,b}$ where $j$ refers to clause $C_j$ and $b \in {0,1}$. Using the technique of \emph{post sequences} which use nested posting (explained using example in Subsection \ref{sub:stage1}), 
we post the set $M$ of $m_{i,j,b}$  messages in the queue of $h_W$ in some order $\sigma$. The remaining 7 handlers of Stage 1 are used to shuffle the messages in $M$ with certain restrictions on the order $\sigma$ of messages. 

 The set $S$ of all the possible orders $\sigma$ is such that, every $\sigma$ is constrained to be \emph{consistent} (see Challenge 2 of Subsection \ref{sub:stage2}) 
 with some particular assignment $f$ of variables of $\phi$. There are no other constraints on the order $\sigma$. At the end of Stage 1, the queues of all other Stage 1 handlers is empty and the queue of $h_W$ is populated in some order $\sigma \in S$ consistent with some assignment $f$. Note that there are multiple $\sigma$ which are consistent with a particular $f$, this fact will be important later.

\underline{Stage 2} Let us fix $\sigma$ and $f$ from Stage 1. Then Stage 2 verifies that $f$ indeed satisfies all the clauses of $\phi$. For this, we build a clause gadget $G_j$ corresponding to each clause $C_j$. The set $E_G$ of events of these clause gadgets occupy the 4 non-$h_W$ handlers of Stage 2. The $E_G$ events belong to an initial message of each of the 4 handlers, and consist purely of read and write events. Recall that the queue of $h_W$ is populated at this point with message set $M$. The information regarding the assignment is encoded in the order of the messages in $h_W$. This information is transferred to the other 4 non-$h_W$ handlers via a technique we call \emph{sandwiching} (explained using example in Subsection \ref{sub:stage2}). 
There are now two possibilities:\\ 
 \noindent(1) If $f$ is not a satisfying assignment, then some clause $C_j$ is not satisfied by $f$. In this case, any order $\sigma$ of messages consistent with $f$ will induce a $\hb$ (happens before) cycle in the corresponding gadget $G_j$ via the sandwiching. Therefore Stage 2 cannot be executed by any witnessing execution. If there are no satisfying assignments, then $\phi$ is unsatisfiable and hence $\tau$ is not consistent.  

 \noindent(2) If $f$ is a satisfying assignment then there is some order $\sigma$ of the messages in $M$ which is consistent with $f$ such that there is a witnessing execution. The clause gadgets are executed interleaved with the messages in $M$ due to the sandwiching. The execution happens sequentially i.e. $G_1$ is executed, then $G_2$ etc. This implies $\tau$ is consistent.

\begin{remark}
    When we say that message $m$ is posted to handler $h$ before message $m'$ is posted to $h$, we refer to the order of post operations as executed in the witnessing execution.
\end{remark}

Through the use of examples, we now give intuition on how the two Stages work, beginning with Stage 2.
\subsection{Stage 2: Checking satisfaction of clauses.}
\label{sub:stage2}
We assume that $h_W$ has been populated with messages in accordance with a variable assignment function $f$ in Stage 1. 
Each clause $C_j$ is associated with a \emph{clause gadget} $G_j$ implemented using handlers $h_{C_a}, h_{C_b}, h_{C_c}, h_{C_d}$ (the non-$h_W$ handlers of Stage 2). For example, for $C_2 = x_1 \vee x_2 \vee \overline{x_n}$, Figure~\ref{fig:clauseGadget} shows the corresponding gadget $G_2$ and two messages posted to $h_W$ in Stage 1.
For reasons of space we use $W$ and $R$ for $write$ and $read$ to describe   events.

Within each gadget, there are \emph{boxes} that contain a read followed by a write event in $\po$. Each box is linked to messages in $h_W$ through $\rf$ relations. 
Consider the box $b_1$ in Figure~\ref{fig:clauseGadget}. Here, the variable $\overline{l^2_{n,1}}$ in $e_{10}$ has the information: superscript $2$ for clause $C_2$, subscript $n,1$ indicating the literal $\overline{x_n}$ and $1$ indicating it is the first event in the box. The events in $b_1$ are linked to the read and write events in the message $m_{n,2,0}$ via $\rf$ arrows. The direction of the arrows implies that the events in box $b_1$ have to be executed after event $e_{15}$ and before $e_{16}$ which are both in message  $m_{n,2,0}$. This is the technique we call \emph{sandwiching}.

Similarly $b_2$ has to be executed during the execution of $m_{n,2,1}$. Suppose $m_{n,2,0}$ is executed before $m_{n,2,1}$ as indicated by the $\eo$, this means that $x_n$ is assigned the value $\true$. This sandwiching induces the red $\hb$ relation shown between $e_{11}$ and $e_{12}$.

\noindent \underline{Clause satisfaction.} Notice that similar boxes can be drawn around events $e_2,e_3$ and $e_4,e_5$ corresponding to copying the assignment to variable $x_1$ and for $e_6,e_7$ and $e_8,e_9$ for variable $x_2$. Each of these boxes has similar sandwiching $\rf$ relations to messages in $h_W$ which are not shown in the figure. The three red $\hb$ arrows correspond to setting each of the three variables in $C_2$ to a value that falsifies the corresponding literal in $C_2$. The events $e_1$ and $e_{14}$ use a variable $z_2$ (where the subscript refers to the clause $C_2$) and are connected by an $\rf$. Under these conditions, a cycle is formed and thus the clause gadget cannot be executed. On the other hand, if even one of the red  arrows is flipped (indicating that a literal of $C_2$ is set to $\true$), then the arrows form a partial order allowing execution of the clause gadget $G_2$. 

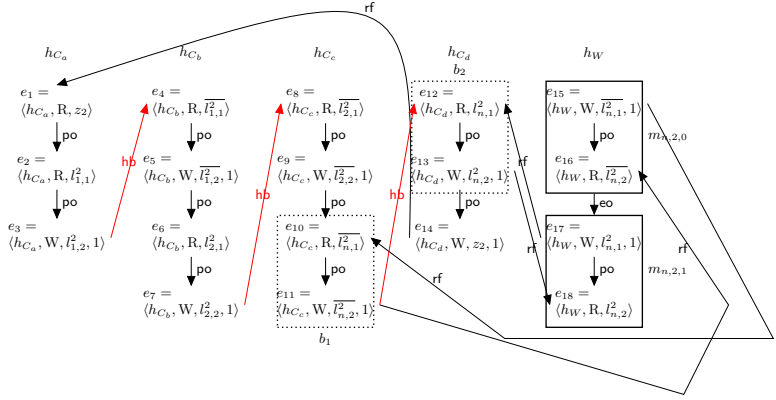
\begin{figure}[h]
        \vspace{-\baselineskip}
        \centering
        \resizebox{0.85\textwidth}{!}{
\begin{tikzpicture}
\foreach \i in {1,...,5} {
    \coordinate (H\i) at ({(\i-1) * \nodehGap},{-0.25* \nodevGap}); 
}

\node (hCa) [minimum width=\nodeWidth, minimum height=\nodeLen] at (H1) {$h_{C_a}$}; 

\node (hCb) [minimum width=\nodeWidth, minimum height=\nodeLen] at (H2) {$h_{C_b}$}; 

\node (hCc) [minimum width=\nodeWidth, minimum height=\nodeLen] at (H3) {$h_{C_c}$};

\node (hCd) [minimum width=\nodeWidth, minimum height=\nodeLen] at (H4) {$h_{C_d}$};

\node (hW) [minimum width=\nodeWidth, minimum height=\nodeLen] at (H5) {$h_W$};
\foreach \i in {1,...,3} {
       \coordinate (E\i) at (0,{-(\i ) * \nodevGap}); 
}
\foreach \i in {4,...,7} {
       \coordinate (E\i) at (\nodehGap,{-(\i - 3) * \nodevGap}); 
}
\foreach \i in {8,...,11} {
       \coordinate (E\i) at (2*\nodehGap,{-(\i - 7) * \nodevGap}); 
}
\foreach \i in {12,...,14} {
       \coordinate (E\i) at (3*\nodehGap,{-(\i - 11) * \nodevGap}); 
}

\foreach \i in {15,...,18} {
       \coordinate (E\i) at (4*\nodehGap,{-(\i - 14) * \nodevGap}); 
}


\ereadEvent{e_1}{E1}{h_{C_a}}{z_2};
\ereadEvent{e_2}{E2}{h_{C_a}}{l^2_{1,1}};
\ewriteEvent{e_3}{E3}{h_{C_a}}{l^2_{1,2}}{1};
\node (b1) [draw,dotted, line width=0.3mm, minimum width=4.3*\nodeWidth,minimum height=5*\nodeLen, label={south:$b_1$}] at ($(E10)!0.5!(E11)$) {};
\node (b2) [draw,dotted, line width=0.3mm, minimum width=4.3*\nodeWidth,minimum height=5*\nodeLen, label={north:$b_2$}] at ($(E12)!0.5!(E13)$) {};

\ereadEvent{e_4}{E4}{h_{C_b}}{\overline{l^2_{1,1}}};
\ewriteEvent{e_5}{E5}{h_{C_b}}{\overline{l^2_{1,2}}}{1};
\ereadEvent{e_6}{E6}{h_{C_b}}{l^2_{2,1}};
\ewriteEvent{e_7}{E7}{h_{C_b}}{l^2_{2,2}}{1};

\ereadEvent{e_8}{E8}{h_{C_c}}{\overline{l^2_{2,1}}};
\ewriteEvent{e_9}{E9}{h_{C_c}}{\overline{l^2_{2,2}}}{1};
\ereadEvent{e_{10}}{E10}{h_{C_c}}{\overline{l^2_{n,1}}};
\ewriteEvent{e_{11}}{E11}{h_{C_c}}{\overline{l^2_{n,2}}}{1};

\ereadEvent{e_{12}}{E12}{h_{C_d}}{l^2_{n,1}};
\ewriteEvent{e_{13}}{E13}{h_{C_d}}{l^2_{n,2}}{1};
\ewriteEvent{e_{14}}{E14}{h_{C_d}}{z_2}{1};

\ewriteEvent{e_{15}}{E15}{h_W}{\overline{l^2_{n,1}}}{1};
\ereadEvent{e_{16}}{E16}{h_W}{\overline{l^2_{n,2}}};
\ewriteEvent{e_{17}}{E17}{h_W}{l^2_{n,1}}{1};
\ereadEvent{e_{18}}{E18}{h_W}{l^2_{n,2}};

\node (m1) [draw, line width=0.3mm, minimum width=4.3*\nodeWidth,minimum height=5*\nodeLen, label={east:$m_{n,2,0}$}] at ($(E15)!0.5!(E16)$) {};
\node (m2) [draw, line width=0.3mm, minimum width=4.3*\nodeWidth,minimum height=5*\nodeLen, label={east:$m_{n,2,1}$}] at ($(E17)!0.5!(E18)$) {};


\vdrawPo{e_1}{e_2};
\vdrawPo{e_2}{e_3};

\vdrawPo{e_4}{e_5};
\vdrawPo{e_5}{e_6};
\vdrawPo{e_6}{e_7};

\vdrawPo{e_8}{e_9};
\vdrawPo{e_9}{e_{10}};
\vdrawPo{e_{10}}{e_{11}};

\vdrawPo{e_{12}}{e_{13}};
\vdrawPo{e_{13}}{e_{14}};

\vdrawPo{e_{15}}{e_{16}};
\vdrawPo{e_{17}}{e_{18}};

\draw[->] (e_{14}.west) .. controls (8,1.5) .. node[midway, above] {$\rf$} (e_1.north);

\draw[->] ($(e_{11}.east)$) -- (14,-8) -- (15,-6)-- (e_{16}.east) node[midway, below] {$\rf$};
\draw[->] (e_{15}.east) --(16,-6.75) --(10,-6.75) -- ($(e_{10}.east)$) node[midway, above] {$\rf$};

\draw[->] ($(e_{13}.east)$) -- (e_{18}.west) node[midway, below] {$\rf$};
\draw[->] (e_{17}.west) -- ($(e_{12}.east)$) node[midway, above] {$\rf$};
\draw[->] (m1.south) -- (m2.north) node[midway, right] {$\eo$};

\draw[->,color=red] (e_{11}.east) -- ($(e_{12}.west)$) node[midway, above] {$\hb$};
\draw[->,color=red] (e_7.east) -- ($(e_8.west)$) node[midway, above] {$\hb$};
\draw[->,color=red] (e_3.east) -- ($(e_4.west)$) node[midway, above] {$\hb$};
\end{tikzpicture}
}
\caption{The structure of clause gadget $G_2$ of $C_2$ for the example in Figure \ref{fig:varsAndClauses}.}
\label{fig:clauseGadget}
\end{figure}

Note that the clause gadgets $G_1,G_1,\ldots, G_m$ are placed in that order in the handlers $h_{C_a}, h_{C_b}, h_{C_c}, h_{C_d}$ and connected by $\po$ arrows. For example, $G_j.e_3$ will be $\po$ before $G_{j+1}.e_1$ in $h_{C_a}$, $G_j.e_7$ will be $\po$ before $G_{j+1}.e_4$ in $h_{C_b}$, etc.  In other words, the events of each of these four handlers can be assumed to be in an initial message in the respective handlers. There is no posting of events either from or to these 4 handlers. 

\subsection{Stage 1: Encoding variable assignments.}
\label{sub:stage1}
In this Stage, we use the handlers $h_V,h_{t_1},h_{t_2},h_{t_3},h_{t_4}, h_{t_5},h_{t_6}$ in order to post messages to $h_W$. 
We stated that an assignment to variable $x_i$ can be encoded as the order between  $m_{i,j,0},m_{i,j,1}$.
\vspace{0.1in}

\noindent \underline{Challenge 1:} How can we ensure that  the messages $m_{i,j,0},m_{i,j,1}$ can be posted in any order to $h_W$? \\ 
In order to solve this, we use \emph{nested posting}. 
$h_V$ posts $m'_{i,j,0}$ to $h_1$ and $m'_{i,j,1}$ to $h_2$, which in turn post $m_{i,j,0}$ and $m_{i,j,1}$ to $h_W$. Since $m'_{i,j,0}$ and $m'_{i,j,1}$ are on different handlers,
they can be executed in any order, thus ensuring that $m_{i,j,0},m_{i,j,1}$ can posted in any order to $h_W$. Next we take up the reason for using multiple messages for each variable. 

\begin{remark}
\label{rem:3copies}
  Consider the sandwiching technique that we presented in Stage 2 in order to copy the assignment of a variable to the clause gadget. Suppose we were to use a single pair of messages $m_{i,0},m_{i,1}$ in $h_W$ for a variable $x_i$ from which this value was copied to the different clauses in which $x_i$ occurs. This means that any handler in which a clause gadget is being executed would be blocked from running till all of the clauses containing $x_i$ are able to finish executing the boxes corresponding to $x_i$. This leads to a cascading set of blocked handlers, requiring an unbounded number of handlers to execute the clause gadgets. In order to overcome this difficulty, we have to use upto three copies of the two messages $m_{i,0},m_{i,1}$ as mentioned before. But this leads to a different challenge.      
\end{remark}
\vspace{0.1in}

\noindent \underline{Challenge 2:} How can we ensure that the different copies of the messages corresponding to the same variable exhibit the same value?\\ 
\vspace{0.05in}

To solve this, we rely on the structure of the 3-BI-3SAT formula. Figure~\ref{fig:varsAndClauses} depicts a grid with variables as rows and clauses as columns. Marked cells indicate variable occurrences. For each literal $l$, we define a \emph{post sequence} $p^l$ composed of segments $p^l_1, \ldots, p^l_7$ corresponding to marked and unmarked cells.

\noindent To address this, we further extend the nesting of posts, relying on the structure of the 3-BI-3SAT formula $\phi$. 
Figure~\ref{fig:varsAndClauses} depicts a grid with variables as rows and clauses as columns. Marked cells indicate variable occurrences.
Each row  contains  2 or 3 marked cells and each column contains 4 or 6 marked cells as per the restriction on 3-BI-3SAT.
For each literal $l$, we define a \emph{post sequence} $p^l$ composed of segments $p^l_1, \ldots, p^l_7$ corresponding to marked and unmarked cells.
Marked cells trigger the posting of messages to $h_W$, while unmarked ones simply post to $h_V$ and defer execution. The post sequence of each marked cell is designed to enforce consistent ordering of the associated messages.
We will now describe the post sequences in detail.

A post sequence is a partial trace of the form
\begin{tikzpicture}
        \centering
        \foreach \i in {1,...,5} {
       \coordinate (E\i) at ({(\i - 1) * \nodeGap}, 0); 
}

\postEvent{e1}{E1}{h_0}{h_1};
\getEvent{e2}{E2}{h_1};

\postEvent{e3}{E3}{h_1}{h_2};
\node (e4) [minimum width=\nodeWidth, minimum height=\nodeLen] 
        at (E4) {$\ldots$};
\postEvent{e5}{E5}{h_{n-1}}{h_n};

\drawPb{e1}{e2};
\drawPo{e2}{e3};
\drawPb{e3}{e4};
\drawPo{e4}{e5};
\end{tikzpicture}

We will simply write this as $p=\tuple{h_1,\post,h_2,\post,\ldots,\post,h_n}$. In case $h_i=h_{i+1}=\ldots=h_j$ we will further shorten this to $\tuple{h_1,\post$,$h_2,\post,\ldots$, $h_i$, $\post^{j-i-1},h_j$, $\post,h_{j+1}$, $\post,\ldots,\post,h_n}$.

\begin{wrapfigure}[22]{r}{0.43\textwidth}
        \vspace{-2\baselineskip}
                \centering
                \begin{tikzpicture}[scale=0.5]
        
                    \def\cellSize{0.5cm}
        
                    \foreach \i in {1,...,8} {
                        \foreach \j in {1,...,8} {
                            \draw[thick] (\j-1, -\i+1) rectangle (\j, -\i);
        
                            \coordinate (mid-\i-\j) at (\j-0.5, -\i+0.5);
                        }
                    }
        
                \node (x1) [minimum width=\nodeWidth, minimum height=\nodeLen] 
                        at ($(mid-1-1)-(1,0)$) {$x_1$};
                \node (barx1) [minimum width=\nodeWidth, minimum height=\nodeLen] 
                        at ($(mid-2-1)-(1,0)$) {$\overline{x_1}$};
                \node (x2) [minimum width=\nodeWidth, minimum height=\nodeLen] 
                        at ($(mid-3-1)-(1,0)$) {$x_2$};
                \node (barx2) [minimum width=\nodeWidth, minimum height=\nodeLen] 
                        at ($(mid-4-1)-(1,0)$) {$\overline{x_2}$};
                \node (ldots1) [minimum width=\nodeWidth, minimum height=\nodeLen] 
                        at ($(mid-5-1)-(1,0)$) {$\vdots$};
                \node (xn) [minimum width=\nodeWidth, minimum height=\nodeLen] 
                        at ($(mid-7-1)-(1,0)$) {$x_n$};
                \node (barxn) [minimum width=\nodeWidth, minimum height=\nodeLen] 
                        at ($(mid-8-1)-(1,0)$) {$\overline{x_n}$};
        
                \node (C1) [minimum width=\nodeWidth, minimum height=\nodeLen] 
                        at ($(mid-1-1)+(0,1)$) {$C_1$};
                \node (C2) [minimum width=\nodeWidth, minimum height=\nodeLen] 
                        at ($(mid-1-2)+(0,1)$) {$C_2$};
                \node (C3) [minimum width=\nodeWidth, minimum height=\nodeLen] 
                        at ($(mid-1-3)+(0,1)$) {$C_3$};
                \node (C4) [minimum width=\nodeWidth, minimum height=\nodeLen] 
                        at ($(mid-1-4)+(0,1)$) {$C_4$};
                \node (ldots1) [minimum width=\nodeWidth, minimum height=\nodeLen] 
                        at ($(mid-1-5)+(0,1)$) {$\ldots$};
                \node (C8) [minimum width=\nodeWidth, minimum height=\nodeLen] 
                        at ($(mid-1-6)+(0,1)$) {$C_8$};
                \node (ldots2) [minimum width=\nodeWidth, minimum height=\nodeLen] 
                        at ($(mid-1-7)+(0,1)$) {$\ldots$};
                \node (Cm) [minimum width=\nodeWidth, minimum height=\nodeLen] 
                        at ($(mid-1-8)+(0,1)$) {$C_m$};

                \node (n11) [minimum width=\nodeWidth, minimum height=\nodeLen] 
                        at ($(mid-1-2)$) {\tiny{$(1,1)$}};
                \node (n21) [minimum width=\nodeWidth, minimum height=\nodeLen] 
                        at ($(mid-2-2)$) {\tiny{$\overline{(1,1)}$}};
        
                \node (n12) [minimum width=\nodeWidth, minimum height=\nodeLen] 
                        at ($(mid-1-4)$) {\tiny{$\overline{(1,2)}$}};
                \node (n22) [minimum width=\nodeWidth, minimum height=\nodeLen] 
                        at ($(mid-2-4)$) {\tiny{$(1,2)$}};
        
                \node (n13) [minimum width=\nodeWidth, minimum height=\nodeLen] 
                        at ($(mid-1-6)$) {\tiny{$(1,3)$}};
                \node (n23) [minimum width=\nodeWidth, minimum height=\nodeLen] 
                        at ($(mid-2-6)$) {\tiny{$\overline{(1,3)}$}};
        
                \node (n31) [minimum width=\nodeWidth, minimum height=\nodeLen] 
                        at ($(mid-3-1)$) {\tiny{$\overline{(1,1)}$}};
                \node (n41) [minimum width=\nodeWidth, minimum height=\nodeLen] 
                        at ($(mid-4-1)$) {\tiny{$(1,1)$}};
        
                \node (n32) [minimum width=\nodeWidth, minimum height=\nodeLen] 
                        at ($(mid-3-2)$) {\tiny{$(2,2)$}};
                \node (n42) [minimum width=\nodeWidth, minimum height=\nodeLen] 
                        at ($(mid-4-2)$) {\tiny{$\overline{(2,2)}$}};
        
                \node (n33) [minimum width=\nodeWidth, minimum height=\nodeLen] 
                        at ($(mid-3-3)$) {\tiny{$\overline{(3,1)}$}};
                \node (n43) [minimum width=\nodeWidth, minimum height=\nodeLen] 
                        at ($(mid-4-3)$) {\tiny{$(3,1)$}};
        
                \node (n71) [minimum width=\nodeWidth, minimum height=\nodeLen] 
                        at ($(mid-7-2)$) {\tiny{$\overline{(1,3)}$}};
                \node (n81) [minimum width=\nodeWidth, minimum height=\nodeLen] 
                        at ($(mid-8-2)$) {\tiny{$(1,3)$}};
        
                \foreach \i in {5,...,6} {
                        \foreach \j in {1,...,8} {
                            \node (n\i\j) [minimum width=\nodeWidth, minimum height=\nodeLen] 
                        at ($(mid-\i-\j)+(0,0.25)$) {\tiny{$\vdots$}};
        
                        }
                    }
                \foreach \i in {1,...,8} {
                        \foreach \j in {5,7} {
                            \node (m\i\j) [minimum width=\nodeWidth, minimum height=\nodeLen] 
                        at ($(mid-\i-\j)$) {\tiny{$\ldots$}};
        
                        }
                    }
                \end{tikzpicture}
                \caption{\footnotesize{Relationship between variables and clauses dictating the nesting of posts. Empty cell means variable does not occur in clause (not all nonempty cells are shown). A cell marked $(u,v)$ or  $\overline{(u,v)}$ indicates that it is the $u$-th occurrence of the variable in a clause and is the $v$-th variable of the clause.
                The bars on the tuple indicate the polarity of the variable occurrence. $C_2=x_1 \vee x_2 \vee \overline{x_n}$, $x_1$ occurs in $C_2,C_8$ and $\overline{x_1}$ occurs in $C_4$.}}
                \label{fig:varsAndClauses}
        \vspace{-\baselineskip}
\end{wrapfigure}
Consider the row labelled by $\overline{x_1}$ in the Figure \ref{fig:varsAndClauses}. The post sequence is the concatenation of 7 post sequences $p^{\overline{x_1}}=p^{\overline{x_1}}_1p^{\overline{x_1}}_2p^{\overline{x_1}}_3p^{\overline{x_1}}_4p^{\overline{x_1}}_5p^{\overline{x_1}}_6p^{\overline{x_1}}_7$ where $p^{\overline{x_1}}_2,p^{\overline{x_1}}_4,p^{\overline{x_1}}_6$ correspond to the cells marked $\overline{1,1}$, $1,2$ and $\overline{1,3}$ respectively, while the others correspond to the part of the row consisting of unmarked cells, with $p^{\overline{x_1}}_1$ for the part from the beginning till the first marked cell, etc. Each of the post sequences $p^{\overline{x_1}}_1$, $p^{\overline{x_1}}_3$, $p^{\overline{x_1}}_5$, $p^{\overline{x_1}}_7$ consists of a long sequence of posts to $h_V$ of length the number of unmarked cells in the segment they correspond to. For example $p^{\overline{x_1}}_1=p^{\overline{x_1}}_3=\tuple{h_V,\post,h_V}$ while $p^{\overline{x_1}}_5=\tuple{h_V,\post^3,h_V}$ since there are 3 empty cells in between (in the figure they are not explicitly shown, but rather by $\ldots$, but once can infer that the boxes corresponding to $C_5,C_6,C_7$ are empty along this row). The  idea is that the post sequences are executed column by column. The empty cell post sequences simply `send to back of queue' while the marked cells are responsible for populating $h_W$ with an appropriate sequence of messages as explained below.

We now describe the post sequences made in the marked cells. Consider the 6 marked cells corresponding to column $C_2$. 
Top to bottom, these are $p^{x_1}_2, p^{\overline{x_1}}_2, p^{x_2}_4, p^{\overline{x_2}}_4, p^{x_n}_2, p^{\overline{x_n}}_2$. 
Let us consider the post sequence for a cell labelled $(u,v)$ (resp. $\overline{(u,v)}$), indicating that it is the $u$-th occurrence of the variable in a clause and is the $v$-th variable of the clause, with the bar indicating whether the variable or its negation occurs in the clause. Suppose $u \neq 1$, then the post sequence is $\tuple{h_V,\post,h_{t_v},\post,h_V}$ for both $(u,v)$ as well as $\overline{(u,v)}$. If $u=1$ then the post sequence is $\tuple{h_V,\post^2,h_{t_v},\post,h_V}$ for $(u,v)$ but it is $\tuple{h_V,\post,h_{t_{v+3}},\post,h_{t_{v}},\post,h_V}$ for $\overline{(u,v)}$. 
For example,  $p^{x_n}_2$ which is marked $\overline{(1,3)}$ has the post sequence $\tuple{h_V,\post,h_{t_6},\post,h_{t_3},\post,h_V}$. Intuitively, the post sequences of the variable and its negation move to different handlers $h_{t_k}$ before coming back to the same handler iff a variable is occurring for the first time in a clause i.e., if $u=1$.    
We modify each post sequence of a marked cell to post a message $m_{i,j,b}$ (corresponding to the occurrence of $x_i$ in $C_j$ in positive or negative form based on the value of the bit $b$) to $h_W$ just before its return to $h_V$. For example, in $p^{x_n}_2$ we insert the events $e_2,e_3,e_4,e_5$ between the events $e_1$ and $e_6$ which are part of $p^{x_n}_2$ as follows:

\begin{tikzpicture}
        \foreach \i in {1,...,4} {
       \coordinate (E\i) at ({(\i-1)  * \nodehGap},0); 
}
\coordinate (E5) at ({3  * \nodehGap},-\nodevGap);
\coordinate (E6) at ({1  * \nodehGap},-\nodevGap);
\coordinate (E7) at ({2  * \nodehGap},-\nodevGap);

\egetEvent{e_1}{E1}{h_{t_3}};
\epostEvent{e_2}{E2}{h_{t_3}}{h_W};
\egetEvent{e_3}{E3}{h_W};
\ewriteEvent{e_4}{E4}{h_W}{l^2_{n,1}}{1};
\ereadEvent{e_5}{E5}{h_W}{l^2_{n,2}};
\epostEvent{e_6}{E6}{h_{t_3}}{h_V};
\node[draw] (e_7) [minimum width=\nodeWidth, minimum height=\nodeLen] 
        at (E7) {$ Stage\ 2$}; 
\drawPo{e_1}{e_2};
\drawPb{e_2}{e_3};
\drawPo{e_3}{e_4};
\vdrawPo{e_4}{e_5};
\vdrawPo{e_2}{e_6};
\draw[dotted,->] (e_4.south) -- (e_7.north) node[midway, right] {$\rf$};
\draw[dotted,->] (e_7.east) -- (e_5.west) node[midway, right] {$\rf$};
\end{tikzpicture}
This means that six messages are posted to $h_W$ corresponding to $C_2$. The assignment to $x_1$ and $x_n$ are chosen by using different handlers $h_{t_i}$ for them, but the assignment to $x_2$ was already chosen when executing the post sequence for $C_1$ corresponding to $x_2$. Hence $p^{x_2}_4, p^{\overline{x_2}}_4$ will both contain $h_{t_2}$ and the corresponding messages will be posted to $h_W$ in the order already chosen. Crucially, we prevent orderings in $h_W$ which do not correspond to consistent assignment of variables. However we allow all other possible reorderings of messages and this is essential for the verification in Stage 2, where only some of these reorderings may be allowed based on the partial order of events in a satisfiable clause i.e. one where not all red  $\hb$ arrows are present (see Figure \ref{fig:clauseGadget}). 

\vspace{0.1in}

In this way, the post sequences ensure that:
\begin{enumerate}
        \item all copies of the same variable agree on the assignment,
        \item messages are posted in an order reflecting this assignment,
        \item the clause gadgets can detect satisfiability by checking the consistency of the induced trace $\tau$.
\end{enumerate}

We provide more details of the reduction and a formal proof of correctness in 
Appendix~\ref{app:hardness}.

\section{Event-driven programs without nested posting}

We have shown that the event-driven consistency problem is NP-hard, even when the number of handlers is bounded. However, a closer examination of our reductions reveals that they rely critically on the ability of a message to post another message to a handler—a feature we refer to as  \emph{nested posting}. This observation motivates the study of a restricted setting in which nested posting is disallowed: does the problem become easier in this case?
We answer this question affirmatively in Theorem~\ref{thm:boundedHandlernoNestingQinP} (proof provided in Appendix~\ref{prf:thm:boundedHandlernoNestingQinP}), which shows that the event-driven consistency problem becomes tractable under this restriction. Specifically, we present a polynomial-time procedure for checking consistency when the number of handlers is bounded and nested posting is not allowed.

\begin{theorem}
	\label{thm:boundedHandlernoNestingQinP}
	Given a trace $\tau=(E,\Delta')$ containing $k$ handlers and no nesting of posts, the  ED-Consistency problem for $\tau$ can be solved in time polynomial in $|E|=n$ and exponential in the number  $k$ of handlers. 
\end{theorem}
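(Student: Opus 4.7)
The plan is to reduce ED-consistency of $\tau=(E,\Delta')$ to the existence of a valid linearization of $E$, and then to search for such a linearization via dynamic programming on a state space whose size is polynomial in $n=|E|$ for fixed $k$.

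By Theorem~\ref{thm:consEq}, an axiomatically consistent extension of $\tau$ exists iff $E$ admits a total order $\pi$ respecting the given $\po,\rf,\fr,\co,\pb$, the sequential handler semantics, and the FIFO mailbox discipline. The key structural consequence of the no-nesting restriction is that every $\post$ event of any handler $h'$ must appear $\po$-before every $\get$ event of $h'$. Hence posts live entirely within the initial messages of the handlers, which are totally $\po$-ordered. Therefore, for each receiver $h$, the set $P_h$ of posts to $h$ decomposes into at most $k$ $\po$-chains (one per sender), and any candidate $\mo|_{P_h}$ is forced to linearly extend this disjoint union of $k$ chains. Moreover, under any acyclic $\hb$, the FIFO constraint forces $\eo$ to be determined by $\mo$ via $\pb$: any deviation produces a cycle through $\eo\subseteq\eodag$ and $\dto=\pb^{-1}\cdot\mo\cdot\pb$. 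Thus the free parameter is $\mo$ alone.

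I would design a dynamic programming algorithm that builds the linearization $\pi$ one event at a time. A DP state records, for each handler $h_i$: the position within $h_i$'s initial message; the identity and progress of the message currently being executed by $h_i$ (if any); and, for every sender $h_j$, the number of posts from $h_j$ to $h_i$ already scheduled in $\pi$ and the number already extracted by $h_i$. These are $O(k^2)$ integers each bounded by $n$, yielding at most $n^{O(k^2)}$ states. A transition schedules the next event of some handler and advances the appropriate counters; its legality is checked in polynomial time against $\rf,\fr,\co,\pb$ and the FIFO rule that a $\get$ extracts the earliest still-pending $\mo$-scheduled post. The algorithm accepts iff the ``all events scheduled'' state is reachable from the initial state; by Theorem~\ref{thm:consEq} this is equivalent to ED-consistency of $\tau$.

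The main obstacle will be showing that the chosen state summary is sufficient, i.e., that two DP paths reaching the same state admit the same set of feasible continuations. The subtle point is the FIFO head of each queue: the identity of the next extracted post depends on the order in which posts were scheduled, which count-summaries alone do not encode. I would resolve this by augmenting each state with a per-handler pointer to the current queue head (its sender plus $\po$-index within that sender); by the no-nesting decomposition this pointer suffices to recover all future extractions inductively, since each subsequent head is the earliest $\po$-remaining post from some sender and the per-sender $\po$-order is fixed. A standard inductive argument then yields a bijection between DP paths reaching the accepting state and valid linearizations of $E$, completing the correctness proof and giving total running time $n^{O(k^2)}$, polynomial in $n$ and exponential in $k$.
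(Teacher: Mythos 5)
Your proposal follows essentially the same route as the paper's proof: use the no-nesting restriction to conclude that all posts lie in the initial messages, so that the posts to each receiver decompose into $k$ fixed $\po$-chains (one per sender) and $\eo$ is forced from $\mo$ via $\pb$ and FIFO; then define configurations consisting of $O(k^2)$ per-handler/per-sender progress pointers plus one program pointer per handler, and reduce consistency to reachability in the resulting configuration graph of size $n^{O(k^2)}$. Up to the bookkeeping of which counters are stored, this is the paper's construction.

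The one point where you go beyond the paper is in flagging the FIFO-head problem --- that count summaries do not determine which pending post is at the head of a receiver's queue --- and this concern is legitimate (the paper's own write-up does not address it). However, your proposed repair, augmenting each state with a single pointer to the current queue head, does not close the gap. After the head is popped, the new head is the \emph{earliest-scheduled} among the remaining pending posts, and which post that is depends on the cross-sender interleaving of the pending posts, not on the per-sender counts together with the old head. Concretely, if the queue of a receiver holds the fifth and sixth posts from sender $1$ and the third post from sender $2$, with the fifth post of sender $1$ at the head, then after popping it the new head may be either sender $1$'s sixth post or sender $2$'s third post, depending on the order in which those two were scheduled --- information your augmented state does not retain. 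Hence the claimed inductive argument that equal states admit equal sets of feasible continuations does not go through as stated. To repair it one must either record a sufficient abstraction of the pending interleaving per receiver, or enforce at post-scheduling time that posts to each receiver are emitted in the order in which they will later be extracted; simply remembering the current head is not enough.
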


\noindent \textbf{Proof Sketch.}
Since there is no nesting of posts, all post events occur in the initial message of each handler. The $\po$ order within the initial message therefore implies an $\mo$ order on all posts made by a given handler $h$. Due to queue semantics, this also implies the corresponding $\eo$ order on the corresponding $\get$ events and also the $\eodag$ relation between all pairs of events occurring in two messages ordered by $\eo$. 
This implies that for the messages posted \emph{to} a handler $h'$, we obtain $k$ different total orders of based on the handler $h$ making the post. We can now express the consistency problem in terms of program termination as follows. We define the notion of a configuration $C$ which consists of 
\begin{enumerate}
	\item [(P1)] $k$ pointers for each handler $h'$ which indicates the messages which have been executed so far in each of the $k$ total orders (for a total of $k^2$ pointers), and 
	\item [(P2)] $k$ additional pointers which indicate the event to be executed next in the message currently being executed in each handler. 
\end{enumerate}
This information requires $O(k^2\log(n))$ space and thus the total number of configurations is exponential in $k$ and polynomial in $n$. 

We can now construct the configuration graph $G$ which consists of nodes representing the configurations and edges which indicate when a transition is possible (e.g. when the a pointer in P1 is moved to the successor event inside a message, all other pointers remaining the same). Thus, the trace is consistent iff there is a path from the initial configuration to the final configuration in  $G$.

\section{Consistency checking: Procedure and Optimisations}
\label{sec:proc}

In this section, we propose two concrete procedures for checking  event-driven consistency.
Both the procedures take as input
the trace as a graph $G$ whose nodes are events and whose edges are
the program relations 
$\rf, \co$, message relation $\po$, and posted-by relation $\pb$. 
The first procedure is based on some \emph{saturation rules}, designed to accelerate consistency checking.
The second procedure involves encoding 
the consistency problem as a constraint satisfaction problem and uses the Z3 SMT solver to solve it.  Observe  that the ED-consistency problem is in $\NP$. Since checking consistency is in polynomial time if all the relations are given, it suffices to guess the missing relations. 

\begin{wrapfigure}[8]{R}{0.58\textwidth} 
	\vspace{-\baselineskip}
		\vspace{-0.5cm}

	\scriptsize	  

	\begin{algorithm}[H]
		\caption{Consistency checking}
		\label{algo:consistency-checking}
		\KwIn{A partial trace $G=(E,\Delta)$ where $\Delta \subseteq E \times \rels' \times E$.}
		\If{$G$ contains a cycle}{
			\Return{Inconsistent} 
		}
		\Else{
				Apply saturation rules (1), (2), and (3) to $G$\;
				$\mathsf{CheckQconsistency(G)}$\;
		}
	\end{algorithm}
\end{wrapfigure}

\subsection{Procedure using Saturation Rules}

The $\mathsf{CheckQconsistency(G)}$ procedure iterates over all possible assignments of $\eo$ and $\mo$ edges for $G$, and for each of these assignments, (1) adds the   $\fr$, $\dto$ and $\eodag$  edges that are defined in Section~\ref{sub:axiomatic_consistency}, and checks whether there is one for which $G$ is acyclic after the addition of these edges.
If there is an assignment for which the graph $G$ is acyclic, the procedure returns true, otherwise, it returns false.

\noindent{\bf Saturation rules.}
To reduce the search space and speed up the consistency check, we define the following saturation rules, depicted in Figure~\ref{fig:patterns-heuristic}:

\begin{figure}[!tbp]
	\centering
        \resizebox{0.85\textwidth}{!}{
	\begin{minipage}{.15\textwidth}
		\centering
		\includegraphics[width=2.2cm]{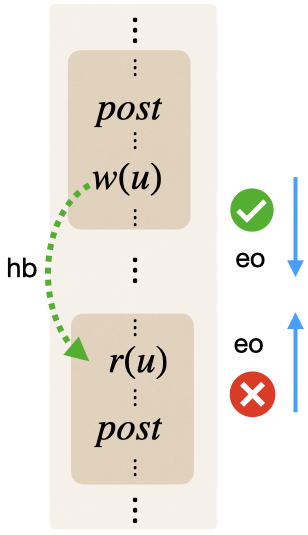}
		\end{minipage}%
	\hspace*{0.4in}
	\begin{minipage}{0.25\textwidth}
		\centering
		\includegraphics[width=3.7cm]{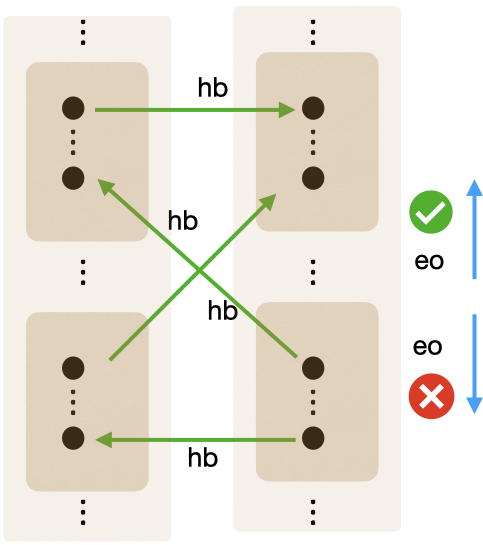}
	\end{minipage}
	\hspace*{0.4in}
	\begin{minipage}{0.25\textwidth}
	  \centering
	  \includegraphics[width=3.7cm]{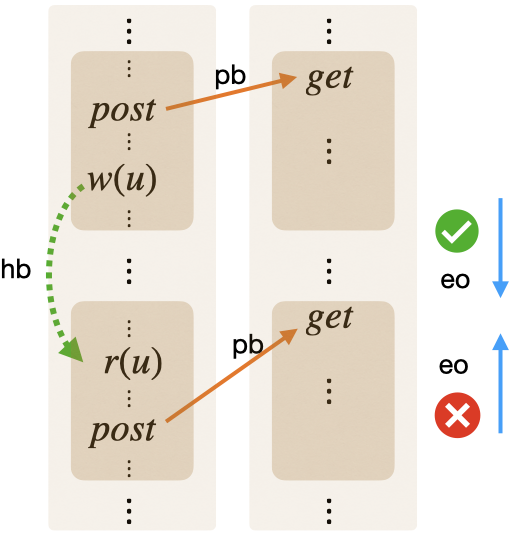}
	\end{minipage}
}
		\vspace{-\baselineskip}
\caption{\footnotesize{Patterns corresponding to saturation rules.} 
		}
	\vspace{-\baselineskip}
	  \label{fig:patterns-heuristic}
  \end{figure}

\noindent\textbf{Rule 1}: If there is a $\hb $ edge between two events in two different messages of the same handler, the $\eo$ edge between these messages have to respect this order.

\noindent\textbf{Rule 2}: Consider the sequence of $\hb$ edges between events as depicted in the second figure. It can be seen that this pattern forces the $\eo$ edge to be as shown in the figure - the inverse $\eo$ edge immediately creates a cycle in the trace.

\noindent\textbf{Rule 3}: This rule checks that the order in which messages are processed follows the queue semantics. Consider the sequence of $\hb$ edges between events as depicted in the third figure. It can be seen that this pattern forces the $\eo$ edge to be as shown in the figure - the inverse $\eo$ edge immediately creates a cycle in the trace.
	
\begin{wrapfigure}[13]{R}{0.5\textwidth}
	\scriptsize
	\begin{algorithm}[H]
		\caption{\small{Consistency checking}}
		\label{algo:z3-checking}
		\KwIn{A partial trace $G=(E,\Delta)$ where $\Delta \subseteq E \times \rels' \times E$.}
        $T \gets Enum(E)$\;
        $O \gets PartialOrder(T)$\;
        $I \gets Z3 instance$\;
        \For{$(a,b) \in \Delta$}{
          $I.assert(O(a, b))$\;
        }
        \For{$h \in handlers$}{
          \For{$m_{1}, m_{2} \in h.messages$}{
            $I.assert(O(m_{1}.last, m_{2}.first) \vee O(m_{2}.last, m_{1}.first))$\;
            $I.assert(O(m_{1}.post, m_{2}.post) \vee O(m_{2}.post, m_{1}.post))$\;
          }
        }
        \Return $I.check()$\;
	\end{algorithm}
\end{wrapfigure}

\subsection{Procedure using SMT encoding}
Here, we present an algorithm to check consistency of a partial event-driven trace using the Z3 solver~\cite{deMoura2008Z3}.
Note that each event $e$ is part of one message, and each message is part of one handler. The algorithm uses the \emph{Special Relations} theory in Z3.

The algorithm takes the input trace $G$ and the set of events $E$. An enum datatype $T$ is created, where each value corresponds to an event.
Then, a partial order $O$ is declared over the events in $T$ using Z3's Special Relations theory - this order will represent the orderings that must be satisfied for the trace to be consistent. Then, a Z3 solver instance $I$ is created to solve the logical constraints defined over the events and their orderings. Then, in lines 4-5, we enforces that each known relation is a subrelation of the partial order $O$, meaning that $O$ must respect all the orders that are already known from the event-driven semantics.
Then, for each handler, and for every pair of messages $m_{1}, m_{2}$ on this handler, we require 
$(m_{1}.\text{last} <_{O} m_{2}.\text{first}~~~ \vee~~~ m_{2}.\text{last} <_{O} m_{1}.\text{first})$
where $m.\text{first}$ and $m.\text{last}$ denote the first (respectively last) instruction of a message $m$. This constraint enforces that the event order ($\eo$) is total within each handler, i.e., the execution of messages is serial. Similarly, we let $m.\text{post}$ denote the posting event of message $m$, and require
$(m_{1}.\text{post} <_{O} m_{2}.\text{post}~~~\vee~~~m_{2}.\text{post} <_{O} m_{1}.\text{post})$
which corresponds to the requirement of $\mo$ being total for the posts to a given handler.
Finally, the Z3 solver is called to check if the given set of constraints are satisfiable. 
If the solver returns Yes, then there exists a global partial order $O$ that extends the known relations and satisfies all handler-level ordering constraints and therefore, the trace is consistent.
Otherwise, no such order exists, which implies that the trace is inconsistent.

\section{Implementation and Experimental Evaluation}
\label{sec:exp}

We have implemented a prototype tool for consistency checking of event-driven traces, based on the algorithms described in Section~\ref{sec:proc}. The prototype verifies whether a given partial trace admits a consistent extension, and, when successful, produces a witness: a concrete assignment to the missing relations (i.e., $mo$ and $eo$). From this witness, a valid execution can be reconstructed. All experiments were conducted on a machine running Debian 12.4 with an Intel(R) Xeon(R) Platinum 8168 CPU @ 2.70GHz and 192 GB of RAM.
Selected results are presented in Table~\ref{tab:experiment_results_droidracer}. Further results can be found in Appendix~\ref{app:experiments}.
\vspace{0.1in}


\begin{table}[!tbp]
	\vspace{-\baselineskip}
	\centering
	\resizebox{\textwidth}{!}{
		\vspace{-\baselineskip}
	\begin{tabular}{| l | c | c | c | c | c | c | c | c | c | c |}
\hline
\multirow{3}{*}{\textbf{Benchmark}} &
&
&
&
\multirow{3}{*}{\# T} &
\multicolumn{3}{c|}{Algorithm 1} & \multicolumn{3}{c|}{Algorithm 2}\\
& Max & Max & Max & & \# Consistent & \# T/O & Time & \# Consistent & \# T/O & Time\\
& \# E & \# M & \# H & & traces & traces & in sec. & traces & traces & in sec.\\
	  \hline
SampleApp & 4776 & 13 & 5 & 1 & 0 & 1 & - & 1 & 0 & 2.9816 \\
Tomdroid & 4776 & 13 & 5 & 2 & 0 & 2 & - & 2 & 0 & 1.8621 \\
Opensudoku & 11292 & 14 & 5 & 1 & 0 & 1 & - & 1 & 0 & 22.0052 \\
Sgtpuzzles & 18406 & 18 & 5 & 2 & 0 & 2 & - & 2 & 0 & 34.6982 \\
Remindme & 6870 & 23 & 5 & 1 & 0 & 1 & - & 1 & 0 & 16.7987 \\
Modelcheckingserver & 4057 & 26 & 4 & 1 & 0 & 1 & - & 1 & 0 & 3.6556 \\
Messenger & 5034 & 26 & 7 & 2 & 0 & 2 & - & 2 & 0 & 5.4606 \\
Music & 4253 & 33 & 5 & 3 & 0 & 3 & - & 3 & 0 & 4.3871 \\
Fbreader & 6159 & 35 & 8 & 6 & 0 & 6 & - & 6 & 0 & 9.4964 \\
K9Mail & 5309 & 46 & 9 & 2 & 0 & 2 & - & 2 & 0 & 13.2162 \\
Aarddict & 1220 & 12 & 5 & 2 & 1 & 1 & 2.0932 & 2 & 0 & 0.0982 \\
AdobeReader & 15717 & 140 & 7 & 1 & 0 & 1 & - & 0 & 1 & - \\
Facebook & 5319 & 14 & 12 & 1 & 0 & 1 & - & 1 & 0 & 1.9775 \\
Twitter & 9889 & 30 & 12 & 1 & 0 & 1 & - & 1 & 0 & 18.7847 \\
Browser & 9762 & 34 & 15 & 6 & 0 & 6 & - & 3 & 3 & 46.9663 \\
Flipkart & 116945 & 61 & 15 & 1 & 0 & 1 & - & 0 & 1 & - \\
Mytracks & 3671 & 32 & 16 & 3 & 1 & 2 & 27.7057 & 3 & 0 & 2.2841 \\

	\hline
	\end{tabular}
	}
	\caption{
		\footnotesize{Experimental results for benchmark programs collected from droidracer. The field \# T denotes the number of traces. The traces can differ in size (events \# E), messages \# M, handlers \# H), and the field contains the maximum of its traces. The field \# Consistent traces denotes the number of these traces for which the implementation reports the existence of a satisfying execution. The field \# T/O traces denotes the number of traces for which our tool timed out (with a timeout of 120s). For any remaining traces, the tool concludes inconsistency. The time fields represent the average runtime for the traces that did not time out. A value of - indicates that the corresponding algorithm timed out on every trace.}
}
	\label{tab:experiment_results_droidracer}
	\vspace{-\baselineskip}
	\end{table}

\smallskip

\noindent{\bf Experiment setting.}
To evaluate our approach, we used two independent methods to generate event-driven traces:

\smallskip

\noindent{\em Synthetic Traces via \Nidhugg.}
Our first method involves using the open-source model-checking tool \Nidhugg\
to generate traces  from ~\cite{Kragl20} and new synthetic programs\footnote{ A detailed discussion of these benchmarks is given in Appendix~\ref{app:experiments}.}. While \Nidhugg supports an event-driven execution model, it interprets asynchronous semantics using multisets rather than FIFO queues. As a result, some of the generated traces may be inconsistent under queue semantics. For each benchmark program, we randomly sampled traces from \Nidhugg's output. These traces are guaranteed to satisfy multiset semantics, but may violate the stricter queue-based consistency.

\smallskip

\noindent{\em Android Traces via \Droidracer.}
Our second source of benchmarks is \Droidracer~\cite{Maiya:pldi14}, a tool for systematic exploration of Android application behaviors. \Droidracer 's Trace Generator executes Android binaries on an emulator and exhaustively generates event sequences up to a bound k using depth-first search. We developed a custom parser to transform these sequences into partial traces suitable for our tool. These traces extracted from Android apps available at~\cite{Maiya2014DroidRacerArtefact}, originally used by Maiya et al.~\cite{Maiya:pldi14}. Static edges (e.g., program order, reads-from, and posted-by) are added during parsing, while the other relations are left unspecified. The tool then checks whether a consistent extension exists. Since Android's semantics closely follow queue-based message handling, we expect all \Droidracer traces to be consistent.

\smallskip

\noindent{\bf {Experimental Results}}
We compared the performance of our two algorithms described in Section~\ref{sec:proc} for event-driven consistency.

The results, shown in Table~\ref{tab:experiment_results_droidracer},
clearly demonstrate the advantage of Algorithm 2 in both perfromance and scalability. While Algorithm 1 performs acceptably on small traces (e.g Aardict), it fails to scale to complex instances due to the combinatorial explosion in possible execution orderings. In contrast, Algorithm 2 benefits from Z3's efficient constraint-solving capabilities, enabling fast detection of consistency or inconsistency even in challenging benchmarks.

To summarise, our experiments indicate that SMT-based techniques can be effectively leveraged for consistency checking in event-driven programs. The integration of SMT solvers, which are already widely adopted in verification tools, provides a scalable and precise foundation for reasoning about partial traces.

\section{Conclusion and Future Work}

In this paper, we investigate the problem of ED consistency under the sequential consistency memory model.
We propose axiomatic semantics for event-driven programs and show equivalence of axiomatic and operational semantics. Furthermore, we establish that checking event-driven consistency is NP-hard,
even when the number of handler threads is bounded.
Further, when there is no nested posting in the trace, we show that checking consistency can be done in polynomial time.  
Finally, we also implement our event-driven consistency checking in a prototype tool, and provide promising experimental results on standard event-driven examples.

In the future, we plan to extend this work to the setting of other memory models such as Release-Acquire, Total Store Ordering etc.
We also plan to integrate our implementation in procedures for Dynamic partial order reduction for event-driven programs, race detection and predictive analysis.
Finally, we also plan to use evaluate our implementation on a wider range of examples, for instance the traces generated from android applications.

\clearpage

\bibliographystyle{splncs04}
\bibliography{bibdatabase}
\appendix

\clearpage

\section{Appendix for Section~\ref{sec:model}}
\label{sec:app-model}

A transition occurs on either the execution of an instruction or a $\get$ which corresponds to receiving a message. The rules dictating these transitions are shown in \cref{app:figtransRules}. We write $\alpha \xrightarrow{a} \alpha'$ to denote a transition. 

\begin{figure}[!htbp]
\flushleft\text{\large\textbf{Event-transitions}}
\flushleft\textsc{Write}\\
\AxiomC{$\inst(\alpha.s_h.\line)=l_i \colon x=a \wedge \alpha.s_h.\val(a)=v$}
\UnaryInfC{$\alpha \xrightarrow{\tuple{h,write,x,v}} \alpha(\nu \leftarrow \nu(x\leftarrow v), s_h.\line \leftarrow \succ(s_h.\line))$ }
\DisplayProof
\flushleft\textsc{Post}\\
\AxiomC{$\alpha.s_h.\line=l\colon post(h',m)$\;\;$\alpha.s_{h'}.\beta \xrightarrow{\post,(m,newmid)} \beta'$\;\;$newmid=(h,\alpha.s_h.mcount)$}
\UnaryInfC{$\alpha \xrightarrow{\tuple{h,\post,h',newmid}} \alpha(s_{h'}.\beta \leftarrow \beta',s_h.mcount \leftarrow s_h.mcount +1, s_h.\line \leftarrow \succ(s_h.\line))$ }
\DisplayProof
 \flushleft\textsc{Read}\\
  \AxiomC{$\inst(\alpha.s_h.\line)=l\colon a=x$}
  \UnaryInfC{$\alpha \xrightarrow{\tuple{h,read,x}} \alpha(s_h.\val \leftarrow s_h.\val(a\leftarrow \nu(x)), s_h.\line \leftarrow \succ(s_h.\line))$ }
  \DisplayProof
\flushleft\textsc{Get}\\
\AxiomC{$\alpha.s_h.\line=l \colon last$ \;\;$\alpha.s_h.\beta_h \xrightarrow{\get,(m,mid)}_\adt \beta'$}
\UnaryInfC{$\alpha \xrightarrow{\tuple{h,\get,mid}} \alpha(s_h.\beta \leftarrow \beta', s_h.mid \leftarrow mid,s_h.\line \leftarrow m.\init)$ }
\DisplayProof

\text{\large\textbf{Local-transitions}}
\flushleft\textsc{IntWrite}\\
\AxiomC{$\inst(\alpha.s_h.\line)=l\colon a=exp$}
\UnaryInfC{$\alpha \xrightarrow{\varepsilon} \alpha(s_h.\val(a)\leftarrow \val(exp), s_h.\line \leftarrow \succ(s_h.\line))$}
\DisplayProof
\newline
\flushleft\textsc{IfCond}\\
\AxiomC{$\inst(\alpha.s_h.\line)=l\colon if \ cond \ \textbf{goto } l'$}
\AxiomC{$\alpha.s_h.\val(cond) = \true$}
\BinaryInfC{$\alpha \xrightarrow{\varepsilon} \alpha(s_h.\line \leftarrow l')$ }
\DisplayProof
\newline
\flushleft\textsc{IfNotCond}\\
\AxiomC{$\inst(\alpha.s_h.\line)=l\colon if \ cond \ \textbf{goto } l'$}
\AxiomC{$\alpha.s_h.\val(cond) = \false$}
\BinaryInfC{$\alpha \xrightarrow{\varepsilon} \alpha(s_h.\line \leftarrow \succ(l))$ }
\DisplayProof
\flushleft\textsc{Goto}\\
\AxiomC{$\inst(\alpha.s_h.\line)=l\colon \textbf{goto } l'$}
\UnaryInfC{$\alpha \xrightarrow{\varepsilon} \alpha(s_h.\line \leftarrow l')$ }
\DisplayProof
  \caption{Transition rules of programs}
  \label{app:figtransRules}
\end{figure}

\clearpage

\section{Equivalence of Operational and Axiomatic Semantics}
\label{app:sec:axSemanticsEqProof}

    \begin{wrapfigure}[20]{r}{6cm}
     \centering
        \centering
        \includegraphics[width=0.5\textwidth]{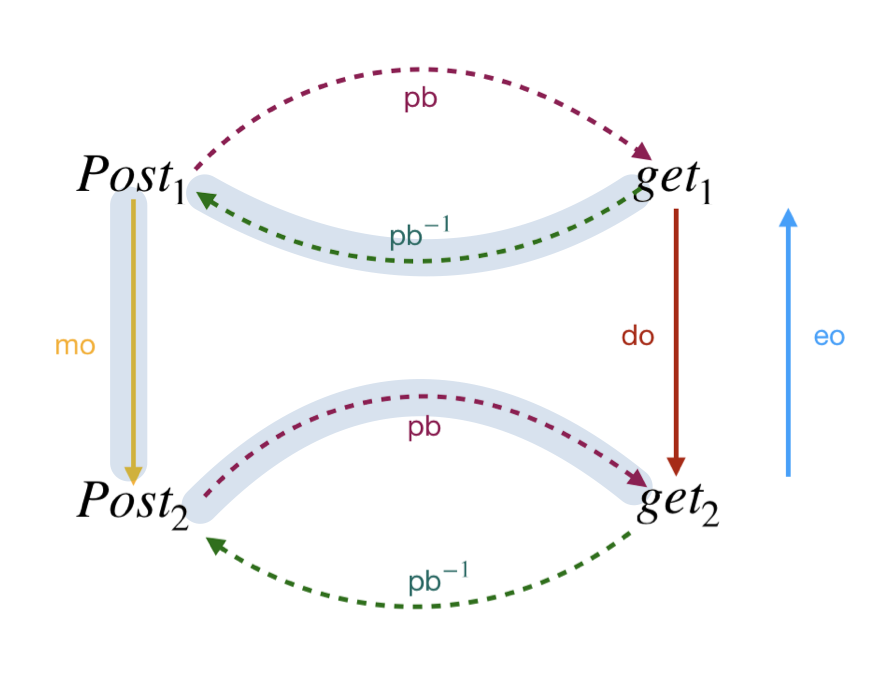}
        \caption{Depiction of $\dto$ edge for the mailbox of a handler
         Here, the $\eo$ edges that violates the queue semantics is depicted in blue, and the violating cycle caused by this violation may be observed.}
        \label{fig:do-queue}
    \end{wrapfigure}
We recall that we work with the set of relation $\rels = \{ \rf, \co, \po, \eo, \pb, \mo\}$. 
A trace is a directed graph $\tau = (E, \Delta)$ where $E$ is a finite set of events, $\Delta \subseteq E \times \rels \times E$ is a set of edges on $E$ with labels from $\rels$. 
A \emph{partial trace} $\tau'=(E', \Delta')$ is said to \emph{extend} a trace $\tau =(E, \Delta)$ if $E \subseteq E', \Delta \subseteq \Delta'$. 
A \emph{linearisation} $\pi=(E,\leq_\pi)$ of a trace $\tau=(E,\Delta)$ is a total ordering $\leq_\pi$ satisfying $\delta \in \Delta \Rightarrow \delta \in \leq_\pi$.

\textbf{Remark.}    
Note that we also sometimes have to deal with traces that are not well-formed, i.e., where not every $\post$ event has a corresponding $\get$ event.
We will need these notions in the proofs below. 
However, the traces we consider as input for consistency checking will always be well-formed.

Further, given a program $\mP$ and its execution $\rho$, recall that the event set $E_{\rho}$ along with the total order $\leq_\rho$ derived from the run induces a trace $\tau(\rho)$.

\begin{theorem}~\label{thm:Eq-axiomatic-operational}
    A trace $\tau$ is axiomatically consistent iff there exists an event-driven program $\mP$ and an execution $\rho$ of $\mP$ such that $\tau=\tau(\rho)$. 
\end{theorem}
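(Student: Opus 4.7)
Plan. The proof splits into two directions. For the forward direction (if $\tau=\tau(\rho)$ then $\hb$ is acyclic), the strategy is to show that every component of $\hb$ is contained in the strict total order $\leq_\rho$ induced by $\rho$, so that $\hb \subseteq \leq_\rho$ is automatically acyclic. The inclusions for $\po, \co, \pb, \mo, \eo$ follow immediately from how these relations are read off $\leq_\rho$ in the definition of $\tau(\rho)$. For $\rf$, the chosen write-source is by construction the latest preceding write in $\leq_\rho$, so $\rf \subseteq \leq_\rho$. For $\fr$, if $e \;\fr\; e'$ via a write $f$ with $f \;\rf\; e$ and $f \;\co\; e'$, then $e'$ cannot lie between $f$ and $e$ in $\leq_\rho$ (otherwise $e'$ would be a later write to the same variable than $f$ before $e$, contradicting that $f$ was the latest write before $e$); hence $e <_\rho e'$. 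For $\dto$, the FIFO mailbox semantics guarantees that if $f_1 \;\mo\; f_2$ then the posts are enqueued in that order and dequeued in that order, so the corresponding gets are $\leq_\rho$-ordered accordingly. For $\eodag$, use that a handler processes each message to completion before starting the next one, so every event $\po$-after an earlier-scheduled get precedes every event $\po$-after a later-scheduled get in $\leq_\rho$.

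For the converse direction (if $\hb$ is acyclic then $\tau$ is realised by some program and run), I would first topologically sort $E$ with respect to $\hb$ to obtain a linearization $\leq_\pi$ extending $\hb$. I then build a canonical program $\mP$ by reading off the code of each message directly from the $\po$-sequence in $\tau$: for each handler $h$ and each message identified by a $\get$ event (or the distinguished initial block), emit the instructions in $\po$-order, with each trace event corresponding to the obvious instruction (reads and writes on the stated variables with hard-coded values, posts to the stated target handler with the stated message name, and a terminating $\mathtt{last}$). Finally I construct the execution $\rho$ by firing the events in $\leq_\pi$-order, and verify (i) that every step is enabled under the operational semantics, and (ii) that the induced trace $\tau(\rho)$ equals $\tau$.

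The main obstacle is step (i). Three enablement conditions must be checked. First, each read $e$ must actually observe the value written by its $\rf$-source $e'$: if some other write $e''$ on the same variable were scheduled between $e'$ and $e$ in $\leq_\pi$, then either $e'' \;\co\; e'$, contradicting $\co \subseteq \leq_\pi$, or $e' \;\co\; e''$, which gives $e \;\fr\; e''$, contradicting $\fr \subseteq \leq_\pi$. Second, each $\get$ must find its intended message at the head of its mailbox: since posts to a given handler are enqueued in $\mo$-order (and $\mo$ agrees with $\leq_\pi$), the FIFO discipline combined with $\dto \subseteq \leq_\pi$ guarantees that at firing time the head of the queue is exactly the post $\pb$-linked to this $\get$. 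Third, the sequential processing discipline of each handler must be respected: no event of a later-scheduled message is scheduled before all events of the earlier one, which is enforced by $\eodag \subseteq \leq_\pi$ together with the $\po$ constraint putting the initial-message events first. Step (ii) is then a bookkeeping verification, matching each relation induced by $\leq_\rho$ on $\rho$ against the corresponding component of $\tau$; the only nontrivial point is that the $\rf$ of $\tau(\rho)$, defined as the latest preceding write, coincides with the $\rf$ of $\tau$, which is exactly the statement justified in the previous step.
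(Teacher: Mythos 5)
Your proposal is correct and follows essentially the same route as the paper: one direction shows that every component of $\hb$ (including the derived relations $\fr$, $\dto$, $\eodag$) is contained in the total order induced by the run, and the other direction linearizes the acyclic $\hb$, reads off a canonical program from the $\po$-structure of the trace, and replays the events in linearization order while checking that each step is enabled and that the induced relations coincide with those of $\tau$. Your write-up is somewhat more explicit than the paper's on two points — the uniform inclusion $\hb\subseteq\;\leq_\rho$ for acyclicity, and the enabledness of each $\get$ (the intended message being at the head of the FIFO mailbox via $\mo$, $\pb$ and $\dto$) — but these are refinements of the same argument, not a different proof.
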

  
\begin{proof}
    For the forward direction, suppose that a trace $\tau$ is axiomatically consistent.     
    We need to show that there exists a program $\mP$ such that it has a run $\rho$ that satisfies $\tau(\rho)=\tau$.

    Let $\tau = (E, \Delta)$, where $E$ is the set of events and $\Delta$ is the set of edges. Since the trace $\tau$ is axiomatically consistent, we know that $\Delta$ is acyclic.
    Consider a linearisation $\sigma$ of the trace $\tau$ of the form
    $e_1 \cdot e_2 \cdot e_3 \cdots e_n$.
    Let $\sigma_i$ denote the prefix of $\sigma$ containing the first $i$ events and let $\tau_i$ denote the projection of $\tau$ to the first $i$ events of $\sigma$ i.e., $\tau_i = \tau\downarrow_{E_{i}}$.

    We will produce a program $\mP$ and a run $\rho$ of $\mP$ such that $\tau(\rho) = \tau$, where
    $$\rho:= C_0 \xrightarrow{\bar{e}_1} C_1 \xrightarrow{\bar{e}_2} C_2 \xrightarrow{\bar{e}_3} C_3 \cdots $$
    where 
    $\bar{e}_i$'s are event transitions.
    To show that $\tau(\rho) = \tau$, we need to show that 
    \begin{itemize}
        \item The set of events of $\tau(\rho)$ is precisely the set $E$ of events of $\tau$. For this, it suffices to show that $\bar{e}_i = e_i$ for all $1 \leq i \leq n$.
        \item The set of relations in $\tau(\rho)$ are in agreement with the the relations in $\tau$, i.e.,
        $$e_1~ R~ e_2 \iff \bar{e}_1~  R~  \bar{e}_2$$ 
        where $R$ is a relation in $\rels$.
    \end{itemize}

    Next, we will produce the program $\mP$ by defining its building blocks as follows.
    \begin{itemize}
        \item The set of handlers $H$ is given by the set 
        $\{h \mid \tuple{h,post,h'} \text{ is an event in $E$}\}$.
        In other words, the set of handlers are determined by the set of post events in $\tau$, as each message uniquely identifies its handler. 
        \item The set of variables are given by the set of variables $x$ such that either $\tuple{h,write,x,v}$ or $\tuple{h,read,x}$ is an event in $\tau$.
        \item The set of registers is given by $\{a_h \mid h \in H\}$, i.e., 
        each handler has a register. 
        Note that we need just one register per handler, as all the reads can be read to this register.
        \item The set of messages of the handler is given the set of $\get$ events of a handler. 
        For a handler $h$, consider the projection of $\tau$ to the events of $h$. Then, the set of events reachable by a sequence of $\po$ edges from a $\get$ event constitute the instructions of the message $m$
        \begin{itemize}
            \item For a read event of the form $\tuple{h,read,x}$, we add the instruction $a_h = x$ to the current message of handler $h$.
            \item For a write event of the form $\tuple{h,write,x,v}$, we add the instruction $x = v$ to the current message of handler $h$.
            \item For a post event of the form $\tuple{h,post,h'}$, 
            we instantiate a post instruction of the message $\post(h,m)$ in the current active message of the handler $h$.
        \end{itemize}
        Eventually, this $\po$ path will lead to an $last$ event, which indicates the end of the message, at which point, we stop.
        \item Note that an exception to this is the initial messages in each handler, which do not start with a $\get$ event. 
        Consequently, any trace $\tau$ has $n$ minimal events which do not have any incoming edges, one for each handler. 
        To address this, for each handler, we construct the initial message of the handler with the minimal event in the trace $\tau$ corresponding to that handler, and keep adding the set of events reachable by a sequence of $\po$ edges from this event.
        As mentioned above, the end of these messages will be indicated by an $last$ event.
    \end{itemize}
        
    Now that we have constructed the program $\mP$, we will show that it has a run $\rho$ that satisfies $\tau(\rho)=\tau$,
    i.e., the event-driven program relations in $\tau(\rho)$ agree with the respective relations imposed by $\tau$. 
    We will show this inductively.
    Recall that $\tau_i$ denotes the projection of $\tau$ to the first $i$ events of $\sigma$. 
    We will denote by $\rho_i$ the prefix of $\rho$ containing the first $i$ events. 
    Then, we will show that 
    $$\tau(\rho_i) = \tau_i$$
    Note that $\tau_i$ is not a well-formed trace.
    The proof follows by induction on the number of events in $\tau$. 

    For the base case, consider the empty trace $\tau_0$.
    This corresponds to the initial configuration $C_0$ of the program, where the mailboxes are empty and all the variables and registers have the initial value, and all the mailboxes are empty. 

    Let $C_i$ be the configuration reached by the execution $\rho_i$, i.e., by executing the sequence of events $e_1 \cdot e_2 \cdot e_3 \cdots e_i$.
    Note that in $C_i$, all the shared variables and registers will have the value written by the latest write instruction involving that variable (respectively register).
    Further, let $C_i \xrightarrow{e_{i+1}} C_{i+1}$. 
    We will show that $\tau(\rho_{i+1}) = \tau_{i+1}$.

    We will consider four cases, depending on the type of the event $e_{i+1}$.
    \begin{itemize}
        \item \textbf{Read event.} Suppose $e_{i+1} = \tuple{h,read,x}$. 
        Due to sequential consistency, the value read by $a_h$ will be the value written by the last write event on $x$ in the sequence $\rho_i$, say $e_k$ where $k < i$, introducing an $\rf$ edge from $e_k$ to $e_{i+1}$ in $\tau(\rho_{i+1})$.   
        Since $\sigma$ is a linearisation of $\tau$, the $\rf$ edge to $e_{i+1}$ in $\tau$ should also agree with this.  

        \item \textbf{Write event.}
        Suppose $e_{i+1} = \tuple{h,write,x,v}$. 
        This will introduce $\co$ edges from all the write events on $x$ in $\rho_i$ to $e_{i+1}$  in $\tau(\rho_{i+1})$. 
        Once again, since $\sigma$ is a linearisation of $\tau$, these $\co$ edges are consistent with the $\co$ edges in $\tau_{i+1}$.  

        Additionally, these two events also induce $\po$ edges from the preceding event $e_{i}$ of $\rho_i$, whose consistency also follows by the same argument. 
        
        \item \textbf{Post event.}
        Suppose $e_{i+1} = \tuple{h,post,h'}$. 
        This will introduce $\mo$ edges to $e_{i+1}$ from all the post events in $\rho_i$ that have posted to the same handler. 
        The consistency of these edges follows from the fact that $\sigma$ is a linearization of $\tau$.

        \item \textbf{Get event.}
        Suppose $e_{i+1} = \tuple{h,\get}$.
        This will introduce 
        \begin{itemize}
            \item a $\pb$ edge from the post event $\tuple{h',post,h}$ corresponding to this message. 
            \item $\eo$ edges from the get events $\tuple{h,\get}$ seen before in the sequence $\rho_i$.
        \end{itemize}
    
    \end{itemize}
    Thus, for all the possibilities of $e_{i+1}$, we have shown that $\tau(\rho_{i+1}) = \tau_{i+1}$.

    \vspace*{0.2in}

    For the reverse direction, suppose that a program $\mP$ has an execution $\rho$ that induces the trace $\tau(\rho)$. 
    We need to show that $\tau(\rho)$ is axiomatically consistent. 
    From the definition of the trace of a program, we know that $\tau(\rho)$ is 
    of the form $\tau = (E, \leq)$, where 
    \begin{itemize}
        \item $E = E_{\rho}$ is the set of events in $\rho$ and 
        \item $\leq$ is a total order on $E_{\rho}$ defined as $e_i \leq_{\rho} e_j$ iff $i \leq j$, where $e_i, e_j \in E$. 
    \end{itemize}
    From Definition~\ref{defn:axCons}, we know that $\tau$ is said to be axiomatically $\adt$-consistent if the relation $(\po \; \cup \; \rf  \; \cup \; \fr  \; \cup \; \co  \; \cup \; \pb  \; \cup \; \mo  \; \cup \; \eodag  \; \cup \; \dto )$ is acyclic.

    It is easy to see that no cycles are created by $\po$, $\rf$, $\co$, $\pb$, $\mo$ and $\eo$ edges, as this follows from the equivalence of axiomatic and operational models for sequentially consistent programs. 
    Further, since $\fr$ is derived from $\rf$ and $\co$ edges, addition of $\fr$ edges do not create cycles. 
    
    It remains to show that adding the $\eodag$ and $\dto$ edges also do not create cycles. 

    \begin{itemize}
        \item $\eodag$ edges: 
        Note that $\eodag$ edges only exists between events of a message $m$ and a message $m'$ of the same handler, such that there is an $\eo$ edge between the $\get$ event of $m$ and $\get$ event of $m'$. 
        Thus, they only add edges between events of a handler. 
        Since all the events of a handler are totally ordered, and the $\eodag$ respects this total order (as it is inherited from $\eo$), these edges respect the order given by $\rho$. 
        
        \item $\dto$ edges: 
        Before discussing the $\dto$ edges that are induced, we recall that the order of insertion and deletion of messages into the mailbox are governed by the $\mo$ and $\eo$ edges. 
        Recall that the $\mo$ relation orders all the post events involving  messages posted to the same handler, and the $\eo$ relation orders all the get events of the messages of a handler.
            
            Further, recall that $\dto=\pb^{-1}.\mo.\pb$.
            Since the $\mo$ relation orders events that post messages to the mailbox of a handler, 
            it is easy to see that the new $\dto$ edges that are induced  between $\get$ events of the messages posted to the mailbox of the same handler. 
            It suffices to show that the new edges introduced agree with the $\eo$ edges between these $\get$ events. 

            Note that the operations of each handler in the run $\rho$ follows the queue semantics, i.e., the $\get$ instructions should be processed in first-in-first-out manner.  
            Further, the relation $\dto=\pb^{-1}.\mo.\pb$ orders the $\get$ events of the messages posted to the mailbox of any given handler in the order in which $\post$ events are ordered in $\tau$. 
            Therefore, if any two $\get$ events, say $\get_1$ and $\get_2$, violate the queue semantics, then the $\eo$ edge between these events will cause a cycle - there will be a cycle of the form $\get_1 ~\dto~ \get_2 ~\eo~ \get_1$, , as depicted in Figure~\ref{fig:do-queue}.             

            Formally, suppose there is a cycle involving $\dto$ edges. Then, there is a $\dto$ cycle involving two $\get$ events, say $\get_1$ and $\get_2$. 
            The cycle is of the form $\get_1 ~\dto~ \get_2 ~\hb~ \get_1$.
            However, 
            the existence of the $\dto$ edge from $\get_1$ to $\get_2$ implies that the message corresponding to $\get_1$ was added to the queue before the message corresponding to $\get_2$. 
            Further, the $\hb$ edge from $\get_2$ to $\get_2$ implies that $\get_2$ is processed before $\get_1$, which means that the execution of messages corresponding to $\get_1$ and $\get_2$ violate the queue semantics. 
            This is a contradiction to the assumption that $\sigma$ was a consistent execution.  
    \end{itemize}
\end{proof}

\section{Formal Proof of NP-Completeness of ED-Consistency}
\label{app:hardness}

In this section, we provide a formal proof of Theorem~\ref{thm:NP-hardness-bddhandler-queue}, which states that the ED-consistency problem is NP-complete even when the number of handlers is bounded.

\subsection*{Formal Construction}
The proof is done by reduction from 3-BI-3SAT.
Let $\phi$ be a 3-BI-3SAT instance with variables $x_1,x_2,\ldots,x_n$ and clauses $C_1,C_2, \ldots, C_m$. We will construct a partial ED trace $\tau=(E,\Delta)$, with $\Delta \subseteq E \times \rels' \times E$, such that $\tau$ can be extended to a axiomatically consistent  trace  $\tau'=(E,\Delta')$, with $\Delta' \cap (E \times \rels' \times E)=\Delta \cap (E \times \rels' \times E)$ iff $\phi$ is satisfiable.
\vspace{0.1in}

\noindent\textbf{High level structure. }
The construction of the trace $\tau$ is divided into two stages, which we call Stage 1 and Stage 2 respectively. There are 8 handlers in Stage 1 and 5 handlers in Stage 2. One handler, namely $h_W$ is common to both stages, hence totally there are 12 handlers. If a satisfying assignment exists for  $\phi$, then there is a program which can execute the events in Stage 1 followed by those in Stage 2, i.e., $\tau$ is consistent. If $\phi$ is unsatisfiable then there is no witnessing execution possible which executes both stages and $\tau$ is inconsistent. 

\underline{Stage 1} corresponds to the selection of a satisfying assignment $f$ for $\phi$.  
 We can encode the information of whether a variable $x_i$ is assigned true or false using the order of execution of two messages $m_{i,1}$ and $m_{i,0}$ on the same handler, where $x_i$ is assigned true (resp. false) if $m_{i,1}$ (resp. $m_{i,0}$) is executed later. Unfortunately, this will not work due to technical difficulties faced in clause verification which we explain when describing Stage 2.
  This necessitates our extremely technical reduction which makes use of the structure of the 3-BI-3SAT instance where each variable occurs in \emph{at most} three clauses and the variables occurring in a clause are all different. We have to create (at most) 3 copies of the messages, one for each clause in which $x_i$ occurs and find a way to synchronise the assignment between these three copies. 

Hence the messages for $x_i$ are actually of the form $m_{i,j,b}$ where $j$ refers to clause $C_j$ and $b \in {0,1}$. Using the technique of \emph{nested postings} (explained later in full construction), we post the set $M$ of $m_{i,j,b}$  messages in the queue of $h_W$ in some order $\sigma$. The remaining 7 handlers of Stage 1 are used to shuffle the messages with certain restrictions on the order $\sigma$ of messages. 

 The set $S$ of all the possible orders $\sigma$ is such that, every $\sigma$ is constrained to be \emph{consistent} (we explain later in full construction) with some particular assignment $f$ of variables of $\phi$. There are no other constraints on the order $\sigma$. At the end of Stage 1, the queues of all other Stage 1 handlers is empty and the queue of $h_W$ is populated in some order $\sigma \in S$ consistent with some assignment $f$. Note that there are multiple $\sigma$ which are consistent with a particular $f$, this fact will be important later.

\underline{Stage 2} Let us fix $\sigma$ and $f$ from Stage 1. Then Stage 2 verifies that $f$ indeed satisfies all the clauses of $\phi$. For this, we build a clause gadget $G_j$ corresponding to each clause $C_j$. The set $E_G$ of events of these clause gadgets occupy the 4 non-$h_W$ handlers of Stage 2. The $E_G$ events belong to an initial message of each of the 4 handlers, and consist purely of read and write events. Recall that the queue of $h_W$ is populated at this point with messages $M$. The information regarding the assignment is encoded in the order of the messages in $h_W$. This information is transferred to the other 4 non-$h_W$ handlers via a technique we call \emph{sandwiching} (explained in the full construction). There are now two possibilities:\\ 
 \noindent(1) If $f$ is not a satisfying assignment, then some clause $C_j$ is not satisfied by $f$. In this case, any order $\sigma$ of messages consistent with $f$ will induce a $\hb$ (happens before) cycle in the corresponding gadget $G_j$ via the sandwiching. Therefore Stage 2 cannot be executed by any witnessing execution. If there are no satisfying assignments, then $\phi$ is unsatisfiable and hence $\tau$ is not consistent.  

 \noindent(2) If $f$ is a satisfying assignment then there is some order $\sigma$ of the messages in $M$ which is consistent with $f$ such that there is a witnessing execution. The clause gadgets are executed interleaved with the messages in $M$ due to the sandwiching. The execution happens sequentially i.e. $G_1$ is executed, then $G_2$ etc. This implies $\tau$ is consistent. 

\noindent\textbf{Full Construction}

For now we assume that at the end of Stage 1, all of the messages in  $M$ have been posted to $h_W$ in some order $\sigma$ consistent with some assignment $f$ to the variables of $\phi$. We describe how we can check the satisfaction of clauses in Stage 2 before describing Stage 1. 

\vspace{0.1in}

\noindent\textbf{Stage 2: Clause checking.}
Stage 2 events occur in the 5 handlers $h_{C_a}, h_{C_b}, h_{C_c}, h_{C_d}, h_W$.
For each clause $C_j$ we create a clause gadget $G_j$ which consists of 14 events in handlers $h_{C_a}, h_{C_b}, h_{C_c}, h_{C_d}$. Pick a clause, say $C_2=x_1 \vee x_2 \vee \overline{x_n}$.  
Figure \ref{app:fig:clauseGadget} shows the clause gadget $G_2$ for clause $C_2$  together with two of the messages $m_{i,j,b}$ in $h_W$ which are posted by Stage 1. For reasons of space we use $W$ and $R$ for $write$ and $read$ in the description of events. We will use program variables of two forms: (F1) $l^{j}_{i,k}$ and $\overline{l^{j}_{i,k}}$, and (F2) $z_k$. The F2 variables do not correspond in any way to the formula, while the F1 variables do.

\underline{Clause satisfaction.}  First let us focus on the dotted boxes $b_1$ and $b_2$ in the figure. Each box consists of a read followed in $\po$ by a write event. Focusing on $b_1$, the program variable $\overline{l^2_{n,1}}$ in $e_{10}$ has the information: superscript $2$ for clause $C_2$, first subscript $n$ indicating the literal $\overline{x_n}$ and second subscript $1$ indicating it is the first event in the box. Note that since we have made (up to) 3 different copies of each variable in our reduction of regular 3SAT to 3-BI-3SAT, the variables $l^{j}_{i,k}$ and $\overline{l^{j}_{i,k}}$ occur exactly once. The events in $b_1$ are linked to the read and write events in the message $m_{n,2,0}$ via $\rf$ arrows. The direction of the arrows implies that the events in box $b_1$ have to be executed after event $e_{15}$ and before $e_{16}$ which are both in message  $m_{n,2,0}$. This is the technique we call \emph{sandwiching}. Similarly $b_2$ has to be executed during the execution of $m_{n,2,1}$. Suppose $m_{n,2,0}$ is executed before $m_{n,2,1}$ as indicated by the $\eo$, this means that $x_n$ is assigned the value $\true$. The sandwiching induces the red $\hb$ relation shown between $e_{11}$ and $e_{12}$, \emph{copying} the value of the variable from handler $h_W$ to the clause gadget by ensuring that the read and write events of the $\overline{l^2_{n,1}}$ and $\overline{l^2_{n,2}}$ program variables occur before the events on the $l^2_{n,1}$ and $l^2_{n,2}$ variables, which again reflects that the $x_n$ has been set to true in $C_2$.

 Notice that similar boxes can be drawn around events $e_2,e_3$ and $e_4,e_5$ corresponding to copying the assignment to variable $x_1$ and for $e_6,e_7$ and $e_8,e_9$ for variable $x_2$. Further note that the F1 program variables occur in a certain order when we traverse $h_{C_a}, h_{C_b}, h_{C_c}, h_{C_d}$ from top to down and from $a$ to $d$. Let us skip the second subscript (which is just used to denote two copies) and see the order in this example: $l^2_1$, $\overline{l^2_1}$, $l^2_2$, $\overline{l^2_2}$, $\overline{l^2_n}$, $l^2_n$. The fact that $l^2_1$ occurs before $\overline{l^2_1}$ indicates that $x_1$ is present in positive form (as also $l^2_2$ before $\overline{l^2_2}$ representing the occurrence of $x_2$ in positive form). Whereas $\overline{l^2_n}$ before $l^2_n$ program variables indicates that $\overline{x_n}$ is present in $C_2$. In this way, the clause gadget captures the structure of the clause.

 Each of the other boxes around events $e_2,e_3$ and $e_4,e_5$ has similar sandwiching $\rf$ relations to messages $m_{i,j,b}$ in $h_W$ which are not shown in the figure. The three red $\hb$ arrows correspond to setting each of the three variables in $C_2$ to a value that falsifies the corresponding literal in $C_2$. The events $e_1$ and $e_{14}$ use a variable $z_2$ (where the subscript refers to the clause $C_2$) and are connected by an $\rf$. Thus if $m_{1,2,1} \; \eo \; m_{1,2,0}$, $m_{2,2,1} \; \eo \; m_{2,2,0}$ and $m_{n,2,0} \; \eo \; m_{n,2,1}$ all hold, a cycle is formed and the clause gadget cannot be executed. On the other hand, if even one of the red  arrows is flipped (indicating that a literal of $C_2$ is set to $\true$), then the arrows form a partial order allowing execution of the clause gadget $G_2$. Lastly, note that a variable e.g. $x_2$ in $C_2$ also occurs in $C_1$ and $C_3$. In Stage 1 we explain how we can select an assignment for $x_2$ in a consistent way for all three clauses $C_1,C_2,C_3$. 
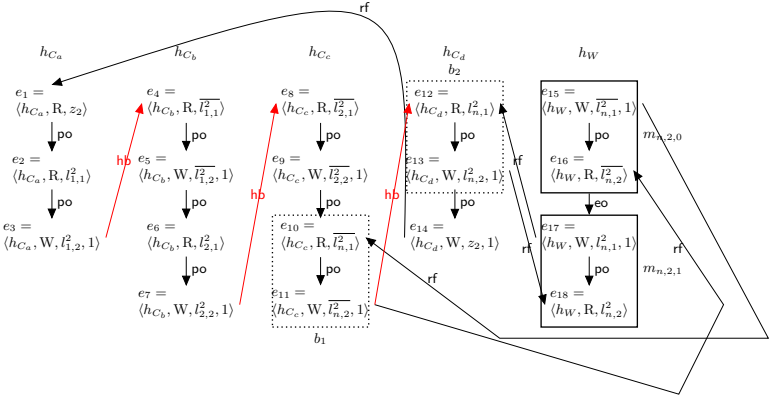
\begin{figure}[h]
        \vspace{-\baselineskip}
        \centering
        \resizebox{0.85\textwidth}{!}{
\begin{tikzpicture}
\foreach \i in {1,...,5} {
    \coordinate (H\i) at ({(\i-1) * \nodehGap},{-0.25* \nodevGap}); 
}

\node (hCa) [minimum width=\nodeWidth, minimum height=\nodeLen] at (H1) {$h_{C_a}$}; 

\node (hCb) [minimum width=\nodeWidth, minimum height=\nodeLen] at (H2) {$h_{C_b}$}; 

\node (hCc) [minimum width=\nodeWidth, minimum height=\nodeLen] at (H3) {$h_{C_c}$};

\node (hCd) [minimum width=\nodeWidth, minimum height=\nodeLen] at (H4) {$h_{C_d}$};

\node (hW) [minimum width=\nodeWidth, minimum height=\nodeLen] at (H5) {$h_W$};
\foreach \i in {1,...,3} {
       \coordinate (E\i) at (0,{-(\i ) * \nodevGap}); 
}
\foreach \i in {4,...,7} {
       \coordinate (E\i) at (\nodehGap,{-(\i - 3) * \nodevGap}); 
}
\foreach \i in {8,...,11} {
       \coordinate (E\i) at (2*\nodehGap,{-(\i - 7) * \nodevGap}); 
}
\foreach \i in {12,...,14} {
       \coordinate (E\i) at (3*\nodehGap,{-(\i - 11) * \nodevGap}); 
}

\foreach \i in {15,...,18} {
       \coordinate (E\i) at (4*\nodehGap,{-(\i - 14) * \nodevGap}); 
}

\ereadEvent{e_1}{E1}{h_{C_a}}{z_2};
\ereadEvent{e_2}{E2}{h_{C_a}}{l^2_{1,1}};
\ewriteEvent{e_3}{E3}{h_{C_a}}{l^2_{1,2}}{1};
\node (b1) [draw,dotted, line width=0.3mm, minimum width=4.3*\nodeWidth,minimum height=5*\nodeLen, label={south:$b_1$}] at ($(E10)!0.5!(E11)$) {};
\node (b2) [draw,dotted, line width=0.3mm, minimum width=4.3*\nodeWidth,minimum height=5*\nodeLen, label={north:$b_2$}] at ($(E12)!0.5!(E13)$) {};

\ereadEvent{e_4}{E4}{h_{C_b}}{\overline{l^2_{1,1}}};
\ewriteEvent{e_5}{E5}{h_{C_b}}{\overline{l^2_{1,2}}}{1};
\ereadEvent{e_6}{E6}{h_{C_b}}{l^2_{2,1}};
\ewriteEvent{e_7}{E7}{h_{C_b}}{l^2_{2,2}}{1};

\ereadEvent{e_8}{E8}{h_{C_c}}{\overline{l^2_{2,1}}};
\ewriteEvent{e_9}{E9}{h_{C_c}}{\overline{l^2_{2,2}}}{1};
\ereadEvent{e_{10}}{E10}{h_{C_c}}{\overline{l^2_{n,1}}};
\ewriteEvent{e_{11}}{E11}{h_{C_c}}{\overline{l^2_{n,2}}}{1};

\ereadEvent{e_{12}}{E12}{h_{C_d}}{l^2_{n,1}};
\ewriteEvent{e_{13}}{E13}{h_{C_d}}{l^2_{n,2}}{1};
\ewriteEvent{e_{14}}{E14}{h_{C_d}}{z_2}{1};

\ewriteEvent{e_{15}}{E15}{h_W}{\overline{l^2_{n,1}}}{1};
\ereadEvent{e_{16}}{E16}{h_W}{\overline{l^2_{n,2}}};
\ewriteEvent{e_{17}}{E17}{h_W}{l^2_{n,1}}{1};
\ereadEvent{e_{18}}{E18}{h_W}{l^2_{n,2}};

\node (m1) [draw, line width=0.3mm, minimum width=4.3*\nodeWidth,minimum height=5*\nodeLen, label={east:$m_{n,2,0}$}] at ($(E15)!0.5!(E16)$) {};
\node (m2) [draw, line width=0.3mm, minimum width=4.3*\nodeWidth,minimum height=5*\nodeLen, label={east:$m_{n,2,1}$}] at ($(E17)!0.5!(E18)$) {};


\vdrawPo{e_1}{e_2};
\vdrawPo{e_2}{e_3};

\vdrawPo{e_4}{e_5};
\vdrawPo{e_5}{e_6};
\vdrawPo{e_6}{e_7};

\vdrawPo{e_8}{e_9};
\vdrawPo{e_9}{e_{10}};
\vdrawPo{e_{10}}{e_{11}};

\vdrawPo{e_{12}}{e_{13}};
\vdrawPo{e_{13}}{e_{14}};

\vdrawPo{e_{15}}{e_{16}};
\vdrawPo{e_{17}}{e_{18}};

\draw[->] (e_{14}.west) .. controls (8,1.5) .. node[midway, above] {$\rf$} (e_1.north);

\draw[->] ($(e_{11}.east)$) -- (14,-8) -- (15,-6)-- (e_{16}.east) node[midway, below] {$\rf$};
\draw[->] (e_{15}.east) --(16,-6.75) --(10,-6.75) -- ($(e_{10}.east)$) node[midway, above] {$\rf$};

\draw[->] ($(e_{13}.east)$) -- (e_{18}.west) node[midway, below] {$\rf$};
\draw[->] (e_{17}.west) -- ($(e_{12}.east)$) node[midway, above] {$\rf$};
\draw[->] (m1.south) -- (m2.north) node[midway, right] {$\eo$};

\draw[->,color=red] (e_{11}.east) -- ($(e_{12}.west)$) node[midway, above] {$\hb$};
\draw[->,color=red] (e_7.east) -- ($(e_8.west)$) node[midway, above] {$\hb$};
\draw[->,color=red] (e_3.east) -- ($(e_4.west)$) node[midway, above] {$\hb$};
\end{tikzpicture}
}
\caption{The structure of clause gadget $G_2$ of $C_2$ for the example in Figure \ref{app:fig:varsAndClauses}.}
\label{app:fig:clauseGadget}
\vspace{-\baselineskip}
\end{figure}

The complete set of events in the handlers $h_{C_a}, h_{C_b}, h_{C_c}, h_{C_d}$ is made up of the union of the events in the clause gadgets $G_j$. 
The clause gadgets $G_1,G_1,\ldots, G_m$ are placed in that order in the handlers $h_{C_a}, h_{C_b}, h_{C_c}, h_{C_d}$ and connected by $\po$ arrows. For example, $G_j.e_3$ will be $\po$ before $G_{j+1}.e_1$ in $h_{C_a}$, $G_j.e_7$ will be $\po$ before $G_{j+1}.e_4$ in $h_{C_b}$, etc.  In other words, the events of each of these four handlers can be assumed to be in an initial message in the respective handlers. There is no posting of events either from or to these 4 handlers. 

\noindent\textbf{Stage 1: Variable assignment.}
In this Stage, we use the handlers $h_V,h_{t_1},h_{t_2},h_{t_3},h_{t_4}, h_{t_5},h_{t_6}$ in order to post messages to $h_W$. 
We stated that an assignment to variable $x_i$ in clause $C_j$ can be encoded as the order between two messages $m_{i,j,0},m_{i,j,1}$.\\ 
\underline{Challenge 1:} How can we ensure that two messages $m_{i,j,0},m_{i,j,1}$ can posted in any order to $h_W$? \\ 
In order to solve this, we use \emph{nested posting}. $h_V$ can post  messages $m'_{i,j,0}$ to $h_{t_1}$ and $m'_{i,j,1}$ to $h_{t_2}$. Then  $m'_{i,j,0}$ (resp. $m'_{i,j,1}$) posts $m_{i,j,0}$ (resp. $m_{i,j,1}$) to $h_W$. Since $m'_{i,j,0}$ and $m'_{i,j,1}$ are on different handlers, they can be executed in any order, thus ensuring that $m_{i,j,0},m_{i,j,1}$ can posted in any order to $h_W$. Next we take up the reason for using multiple messages for each variable. 

Consider the sandwiching technique that we presented in Stage 2 in order to copy the assignment of a variable to the clause gadget. Suppose we were to use a single pair of messages $m_{i,0},m_{i,1}$ in $h_W$ for a variable $x_i$ from which this value was copied to the different clauses in which $x_i$ occurs. This means that any handler in which a clause gadget is being executed would be blocked from running till all of the clauses containing $x_i$ are able to finish executing the boxes corresponding to $x_i$. This leads to a cascading set of blocked handlers, requiring an unbounded number of handlers to execute the clause gadgets. In order to overcome this difficulty, we have to use three copies of the two messages as mentioned before. But this leads to a different challenge:\\
\underline{Challenge 2:} How can we ensure that the different copies of the messages corresponding to the same variable encode the same value?\\ 
To address this, we further extend the nesting of posts. In order to understand how this is done, we have to look into the structure of the formula $\phi$. Figure \ref{app:fig:varsAndClauses} represents the occurrence of the $n$ variables in the $m$ clauses (along with which literal occurs by use of a bar). Note that each row contains 3 marked cells and each column contains 4 or 6 marked cells as per the restriction on 3-BI-3SAT. A \emph{post sequence} is a partial trace of the form
\begin{tikzpicture}
\foreach \i in {1,...,5} {
       \coordinate (E\i) at ({(\i - 1) * \nodeGap}, 0); 
}

\postEvent{e1}{E1}{h_0}{h_1};
\getEvent{e2}{E2}{h_1};

\postEvent{e3}{E3}{h_1}{h_2};
\node (e4) [minimum width=\nodeWidth, minimum height=\nodeLen] 
        at (E4) {$\ldots$};
\postEvent{e5}{E5}{h_{n-1}}{h_n};

\drawPb{e1}{e2};
\drawPo{e2}{e3};
\drawPb{e3}{e4};
\drawPo{e4}{e5};
\end{tikzpicture}
\begin{figure}
        \vspace{-\baselineskip}
                \centering
                \begin{tikzpicture}[scale=0.5]
        
                    \def\cellSize{0.5cm}
        
                    \foreach \i in {1,...,8} {
                        \foreach \j in {1,...,8} {
                            \draw[thick] (\j-1, -\i+1) rectangle (\j, -\i);
        
                            \coordinate (mid-\i-\j) at (\j-0.5, -\i+0.5);
                        }
                    }
        
                \node (x1) [minimum width=\nodeWidth, minimum height=\nodeLen] 
                        at ($(mid-1-1)-(1,0)$) {$x_1$};
                \node (barx1) [minimum width=\nodeWidth, minimum height=\nodeLen] 
                        at ($(mid-2-1)-(1,0)$) {$\overline{x_1}$};
                \node (x2) [minimum width=\nodeWidth, minimum height=\nodeLen] 
                        at ($(mid-3-1)-(1,0)$) {$x_2$};
                \node (barx2) [minimum width=\nodeWidth, minimum height=\nodeLen] 
                        at ($(mid-4-1)-(1,0)$) {$\overline{x_2}$};
                \node (ldots1) [minimum width=\nodeWidth, minimum height=\nodeLen] 
                        at ($(mid-5-1)-(1,0)$) {$\vdots$};
                \node (xn) [minimum width=\nodeWidth, minimum height=\nodeLen] 
                        at ($(mid-7-1)-(1,0)$) {$x_n$};
                \node (barxn) [minimum width=\nodeWidth, minimum height=\nodeLen] 
                        at ($(mid-8-1)-(1,0)$) {$\overline{x_n}$};
        
                \node (C1) [minimum width=\nodeWidth, minimum height=\nodeLen] 
                        at ($(mid-1-1)+(0,1)$) {$C_1$};
                \node (C2) [minimum width=\nodeWidth, minimum height=\nodeLen] 
                        at ($(mid-1-2)+(0,1)$) {$C_2$};
                \node (C3) [minimum width=\nodeWidth, minimum height=\nodeLen] 
                        at ($(mid-1-3)+(0,1)$) {$C_3$};
                \node (C4) [minimum width=\nodeWidth, minimum height=\nodeLen] 
                        at ($(mid-1-4)+(0,1)$) {$C_4$};
                \node (ldots1) [minimum width=\nodeWidth, minimum height=\nodeLen] 
                        at ($(mid-1-5)+(0,1)$) {$\ldots$};
                \node (C8) [minimum width=\nodeWidth, minimum height=\nodeLen] 
                        at ($(mid-1-6)+(0,1)$) {$C_8$};
                \node (ldots2) [minimum width=\nodeWidth, minimum height=\nodeLen] 
                        at ($(mid-1-7)+(0,1)$) {$\ldots$};
                \node (Cm) [minimum width=\nodeWidth, minimum height=\nodeLen] 
                        at ($(mid-1-8)+(0,1)$) {$C_m$};

                \node (n11) [minimum width=\nodeWidth, minimum height=\nodeLen] 
                        at ($(mid-1-2)$) {\tiny{$(1,1)$}};
                \node (n21) [minimum width=\nodeWidth, minimum height=\nodeLen] 
                        at ($(mid-2-2)$) {\tiny{$\overline{(1,1)}$}};
        
                \node (n12) [minimum width=\nodeWidth, minimum height=\nodeLen] 
                        at ($(mid-1-4)$) {\tiny{$\overline{(1,2)}$}};
                \node (n22) [minimum width=\nodeWidth, minimum height=\nodeLen] 
                        at ($(mid-2-4)$) {\tiny{$(1,2)$}};
        
                \node (n13) [minimum width=\nodeWidth, minimum height=\nodeLen] 
                        at ($(mid-1-6)$) {\tiny{$(1,3)$}};
                \node (n23) [minimum width=\nodeWidth, minimum height=\nodeLen] 
                        at ($(mid-2-6)$) {\tiny{$\overline{(1,3)}$}};
        
                \node (n31) [minimum width=\nodeWidth, minimum height=\nodeLen] 
                        at ($(mid-3-1)$) {\tiny{$\overline{(1,1)}$}};
                \node (n41) [minimum width=\nodeWidth, minimum height=\nodeLen] 
                        at ($(mid-4-1)$) {\tiny{$(1,1)$}};
        
                \node (n32) [minimum width=\nodeWidth, minimum height=\nodeLen] 
                        at ($(mid-3-2)$) {\tiny{$(2,2)$}};
                \node (n42) [minimum width=\nodeWidth, minimum height=\nodeLen] 
                        at ($(mid-4-2)$) {\tiny{$\overline{(2,2)}$}};
        
                \node (n33) [minimum width=\nodeWidth, minimum height=\nodeLen] 
                        at ($(mid-3-3)$) {\tiny{$\overline{(3,1)}$}};
                \node (n43) [minimum width=\nodeWidth, minimum height=\nodeLen] 
                        at ($(mid-4-3)$) {\tiny{$(3,1)$}};
        
                \node (n71) [minimum width=\nodeWidth, minimum height=\nodeLen] 
                        at ($(mid-7-2)$) {\tiny{$\overline{(1,3)}$}};
                \node (n81) [minimum width=\nodeWidth, minimum height=\nodeLen] 
                        at ($(mid-8-2)$) {\tiny{$(1,3)$}};

                \foreach \i in {5,...,6} {
                        \foreach \j in {1,...,8} {
                            \node (n\i\j) [minimum width=\nodeWidth, minimum height=\nodeLen] 
                        at ($(mid-\i-\j)+(0,0.25)$) {\tiny{$\vdots$}};
        
                        }
                    }
                \foreach \i in {1,...,8} {
                        \foreach \j in {5,7} {
                            \node (m\i\j) [minimum width=\nodeWidth, minimum height=\nodeLen] 
                        at ($(mid-\i-\j)$) {\tiny{$\ldots$}};
        
                        }
                    }
                \end{tikzpicture}
                \caption{\footnotesize{Relationship between variables and clauses dictating the nesting of posts. Empty cell means variable does not occur in clause(not all nonempty cells are shown in figure). A cell marked $(u,v)$ or  $\overline{(u,v)}$ indicates that it is the $u$-th occurrence of the variable in a clause and is the $v$-th variable of the clause.
                The bars on the tuple indicate the polarity of the variable occurrence. $C_2=x_1  \vee x_2 \vee \overline{x_n}$, $x_1$ occurs in $C_2,C_8$ and $\overline{x_1}$ occurs in $C_4$.}}
                \label{app:fig:varsAndClauses}
        \vspace{-\baselineskip}
\end{figure}

We will simply write this as $p=\tuple{h_1,\post,h_2,\post,\ldots,\post,h_n}$. In case $h_i=h_{i+1}=\ldots=h_j$ we will further shorten this to $\tuple{h_1,\post$,$h_2,\post,\ldots$, $h_{i-1}$, $\post^{j-i},h_j$, $\post,h_{j+1}$, $\post,\ldots,\post,h_n}$. 

\def \tikzVoffsetNode{-3}
\def \tikzHoffsetArrow{3}
\def \tikzVoffsetArrow{-1.5}

\def \tikzHoffsetNode{6}

\def\ptwo{\textcolor{black}{$p_2$}}
\def \pone{$p_1$}
\def \pthree{$p_3$}


\begin{figure} 
    \begin{tikzpicture}[line width=1pt,framed,inner sep=1pt]

    \node[name=c1] at (0,0) { 
    \begin{tabular}{c|c| c}
    $h_1$ & $h_2$ & $h_3$\\ 
      \ptwo (\pthree (\pone( $m_1$))) & \quad & \quad\\
      \pthree (\ptwo (\pone( $m_2$))) & \quad & \quad\\
      \pthree (\ptwo (\pone( $m_3$))) & \quad & \quad
    \end{tabular}
    };
    \node[anchor=south east] at (c1.north west) {$c_1 =$};

    \node at (\tikzHoffsetArrow,0) {$\rightarrow$};

    \node[name=c2] at (\tikzHoffsetNode,0) { 
    \begin{tabular}{c|c|c}
    $h_1$ & $h_2$ & $h_3$\\ 
      \pthree (\ptwo (\pone( $m_2$))) & \pthree (\pone( $m_1$)) & \quad\\
      \pthree (\ptwo (\pone( $m_3$)))  & \quad & \quad\\
      \quad & \quad & \quad 
    \end{tabular}
    };
    \node[anchor=south east] at (c2.north west) {$c_2 =$};

    \node at ($(c2)+(0,\tikzVoffsetArrow)$) {\rotatebox{-90}{$\xrightarrow{\quad * \quad}$}};

    \node[name=c3] at ($(c2)+(0,\tikzVoffsetNode)$) { 
    \begin{tabular}{c|c|c}
   $h_1$ & $h_2$ & $h_3$\\ 
       \quad& \pthree (\pone( $m_1$)) & \ptwo (\pone( $m_2$)) \\
      \quad  & \quad & \ptwo (\pone( $m_3$))\\
      \quad & \quad & \quad 
    \end{tabular}
    };
    \node[anchor=south east] at (c3.north west) {$c_3 =$};

    \node at ($(c3)+(-\tikzHoffsetArrow,0)$) {$\xleftarrow{*}$};

    \node[name=c4] at ($(c3)+(-\tikzHoffsetNode,0)$) { 
    \begin{tabular}{c|c|c}
   $h_1$ & $h_2$ & $h_3$\\ 
       \quad& \pthree (\pone( $m_1$)) & \quad \\
      \quad  & \pone( $m_2$) & \quad\\
      \quad & \pone( $m_3$) & \quad 
    \end{tabular}
    };
    \node[anchor=south east] at (c4.north west) {$c_4 =$};
    \node at ($(c4)+(0,\tikzVoffsetArrow)$) {$\downarrow$};

    \node[name=c5] at ($(c4)+(0, \tikzVoffsetNode)$) { 
    \begin{tabular}{c|c|c}
   $h_1$ & $h_2$ & $h_3$\\ 
       \quad& \pone( $m_2$)& \pone( $m_1$) \\
      \quad  & \pone( $m_3$) & \quad\\
      \quad &  & \quad 
    \end{tabular}
    };
    \node[anchor=south east] at (c5.north west) {$c_5 =$};
    \node at ($(c5)+(\tikzHoffsetArrow,0)$) {$\xrightarrow{*}$};

    \node[name=c6] at ($(c5)+(\tikzHoffsetNode,0)$) { 
    \begin{tabular}{c|c|c}
   $h_1$ & $h_2$ & $h_3$\\ 
      $m_2$ & \quad & \quad \\
      $m_3$  & \quad & \quad\\
      $m_1$ & \quad & \quad  
    \end{tabular}
    };
    \node[anchor=south east] at (c6.north west) {$c_6 =$};

    \end{tikzpicture}
    \caption{Sorting messages via nested posts.}
    \label{fig:bddHandlerQueue}
  \end{figure} 
Before explaining how we use this in our construction, let us look at a simple (and unrelated to the construction) example which shows what nested posting can achieve in Figure \ref{fig:bddHandlerQueue}. 
The figure shows the run of a program with three handlers $h_1,h_2,h_3$. Initially, in configuration $c_1$ (assume that some other handler has made these posts to $h_1$), the queue of $h_1$ contains 3 messages while the other two handlers have empty queues. For space reasons, the messages are written in short. The string $p_2(p_3(p_1(m_1)))$ is short for a message containing a single instruction $post(h_2,m'_1)$, which is at the head of the $h_1$ queue. Note that first $p_2$ indicates that the post instruction posts the message $m'_1$ to $h_2$. Here $m'_1$ is itself a message with a single instruction $post(h_3,m''_1)$ where $m''_1=post(h_1,m_1)$ for the message $m_1$. Our goal is to show how the `inner' messages $m_1,m_2$ and $m_3$ can be put into the queue of $h_1$ in certain orders but not in certain other orders.

 When $p_2(p_3(p_1(m_1)))$ is dequeued and executed, then $m'_1$ is posted and shows up in configuration $c_2$ as $p_3(p_1(m_1))$ in the queue of $h_2$. The other two messages remain in the queue of $h_1$ with the head of the queue now being $p_3(p_2(p_1(m_2)))$. The rightarrow indicates that it a one step transition from $c_1$ to $c_2$. The down arrow with a $*$ indicates that in multiple steps we go to $c_3$, the posting of the two messages in $h_1$ one by one to $h_3$. Focus on the inner most messages $m_2$ and $m_3$. Note that they are now `present' in $h_3$ (wrapped up in posts) in the same order as they were originally in $h_1$. Since the post sequence $p_2,p_3,p_1$ is the same for $m_2$ and $m_3$, they will always pass through different handlers in the same order as they were originally present in $h_1$. However, the post sequences are different for $h_1$. Skipping ahead to $c_5$, we see that $m_1$ is in $h_3$ while $m_2,m_3$ are in $h_2$. At this point, we have shown a sequence of transitions where $m_2$ and $m_3$ are posted first to $h_1$ followed by $m_1$. This results in the inner message order of $m_2,m_3,m_1$ in the queue of $h_1$. However, at configuration $c_5$, by executing $p_1(m_2)$, then $p_1(m_1)$ and then $p_1(m_3)$ we can instead obtain a configuration $c_7$ which results in $h_1$ being populated in the order $m_2,m_1,m_3$. To summarise, $m_1$ can be shuffled with $m_2$ and  $m_3$ in all possible ways, but $m_2$ must always be before $m_3$. In particular, this means we can never get $m_3,m_2,m_1$ in $h_1$. While this explanation has been with respect to program execution, the same logic can be lifted to traces.

Now let us see how the idea is used in our construction. 
For each row of the grid labelled by a literal $l$, we create a post sequence $p^l$. Consider the row labelled by $\overline{x_1}$ in the Figure \ref{app:fig:varsAndClauses}. The post sequence is the concatenation of 7 post sequences $p^{\overline{x_1}}=p^{\overline{x_1}}_1p^{\overline{x_1}}_2p^{\overline{x_1}}_3p^{\overline{x_1}}_4p^{\overline{x_1}}_5p^{\overline{x_1}}_6p^{\overline{x_1}}_7$ where $p^{\overline{x_1}}_2,p^{\overline{x_1}}_4,p^{\overline{x_1}}_6$ correspond to the cells marked $\overline{(1,1)}$, $(1,2)$ and $\overline{(1,3)}$ respectively, while the others correspond to the part of the row consisting of unmarked cells, with $p^{\overline{x_1}}_1$ for the part from the beginning till the first marked cell, etc. Note that in concatenation of post sequences, we have to add a get event in between appropriately. Each of the post sequences $p^{\overline{x_1}}_1$, $p^{\overline{x_1}}_3$, $p^{\overline{x_1}}_5$, $p^{\overline{x_1}}_7$ consists of a long sequence of posts to $h_V$ of length the number of unmarked cells in the segment they correspond to. For example $p^{\overline{x_1}}_1=p^{\overline{x_1}}_3=\tuple{h_V,\post,h_V}$ while $p^{\overline{x_1}}_5=\tuple{h_V,\post^3,h_V}$ since there are 3 empty cells in between (in the figure they are not explicitly shown, but rather by $\ldots$, but once can infer that the boxes corresponding to $C_5,C_6,C_7$ are empty along this row). The overall idea is that the post sequences are executed column by column. The post sequences of the empty cells simply `send to the back of queue' of  $h_V$ while the marked cells are responsible for shuffling the messages in the $h_{t_k}$ handlers which populate $h_W$ with an appropriate sequence of messages as we explain below.

We now describe the post sequences made in the marked cells. Consider the 6 marked cells corresponding to column $C_2$. Top to bottom, these are $p^{x_1}_2, p^{\overline{x_1}}_2, p^{x_2}_4, p^{\overline{x_2}}_4, p^{x_n}_2, p^{\overline{x_n}}_2$. Let us consider the post sequence for a cell labelled $(u,v)$ (resp. $\overline{(u,v)}$), indicating that it is the $u$-th occurrence of the variable in a clause and is the $v$-th variable of the clause, with the bar indicating whether the variable or its negation occurs in the clause. Suppose $u \neq 1$, then the post sequence is $\tuple{h_V,\post,h_{t_v},\post,h_V}$ for both $(u,v)$ as well as $\overline{(u,v)}$. If $u=1$ then the post sequence is $\tuple{h_V,\post^2,h_{t_v},\post,h_V}$ for $(u,v)$ but it is $\tuple{h_V,\post,h_{t_{v+3}},\post,h_{t_{v}},\post,h_V}$ for $\overline{(u,v)}$. 
For example,  $p^{x_n}_2$ which is marked $\overline{(1,3)}$ has the post sequence $\tuple{h_V,\post,h_{t_6},\post,h_{t_3},\post,h_V}$. Intuitively, the post sequences of the variable and its negation move to different handlers $h_{t_k}$ before coming back to the same handler iff a variable is occurring for the first time in a clause i.e., if $u=1$.

We modify each post sequence of a marked cell to post a message $m_{i,j,b}$ (corresponding to the occurrence of $x_i$ in $C_j$ in positive or negative form based on the value of the bit $b$) to $h_W$ just before its return to $h_V$. This is what we called \emph{message insertion}. For example, in $p^{x_n}_2$ we insert the events $e_2,e_3,e_4,e_5$ between the events $e_1$ and $e_6$ which are part of $p^{x_n}_2$ as follows:


\begin{figure}
\begin{tikzpicture}
        \foreach \i in {1,...,4} {
       \coordinate (E\i) at (({\i-1)  * \nodehGap},0); 
}
\coordinate (E5) at ({3  * \nodehGap},-\nodevGap);
\coordinate (E6) at ({1  * \nodehGap},-\nodevGap);
\coordinate (E7) at ({2  * \nodehGap},-\nodevGap);

\egetEvent{e_1}{E1}{h_{t_3}};
\epostEvent{e_2}{E2}{h_{t_3}}{h_W};
\egetEvent{e_3}{E3}{h_W};
\ewriteEvent{e_4}{E4}{h_W}{l^2_{n,1}}{1};
\ereadEvent{e_5}{E5}{h_W}{l^2_{n,2}};
\epostEvent{e_6}{E6}{h_{t_3}}{h_V};
\node[draw] (e_7) [minimum width=\nodeWidth, minimum height=\nodeLen] 
        at (E7) {$ Stage\ 2$}; 
\drawPo{e_1}{e_2};
\drawPb{e_2}{e_3};
\drawPo{e_3}{e_4};
\vdrawPo{e_4}{e_5};
\vdrawPo{e_2}{e_6};
\draw[dotted,->] (e_4.south) -- (e_7.north) node[midway, right] {$\rf$};
\draw[dotted,->] (e_7.east) -- (e_5.west) node[midway, right] {$\rf$};
\end{tikzpicture}
\caption{Inserting message into post sequence.}
\label{app:fig:msgInsertion}
\end{figure}
Consider the set $M_6$ of six messages posted to $h_W$ corresponding to $C_2$. The assignment to $x_1$ and $x_n$ are chosen by using different choices of $k$ of handlers $h_{t_k}$ for them, but the assignment to $x_2$ was already chosen when executing the post sequence for $C_1$. Hence $p^{x_2}_4, p^{\overline{x_2}}_4$ will both contain a single post to $h_{t_2}$ and the corresponding messages will be posted to $h_W$ in the order already chosen during the $C_1$ part. Note that two identical post sequences $p_1=p_2$ which start in some order in the queue of some handler $h$ will occupy the queue of subsequent handlers $h'$ of the post sequence in the same order due to queue semantics. Crucially, we prevent orderings of $M_6$ messages in $h_W$'s queue which do not correspond to consistent assignment of variables. However we allow all other possible reorderings of $M_6$ messages and this is essential for the verification in Stage 2, where only some of these reorderings may be allowed based on the partial order of events in a satisfiable clause i.e. one where not all red  $\hb$ arrows are present (see Figure \ref{app:fig:clauseGadget}).  

We now show the correctness of this construction. 

\subsection*{Correctness of the Construction}

\begin{lemma}
        The 3-BI-3SAT formula $\phi$ is satisfiable if and only if there exists a witnessing execution consistent with $\tau$.
\end{lemma}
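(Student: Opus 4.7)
The plan is to prove the biconditional by showing that (i) a satisfying assignment $f$ of $\phi$ yields a consistent extension of $\tau$, and (ii) any consistent extension of $\tau$ yields a satisfying assignment. Throughout, I will rely on Theorem~\ref{thm:consEq}, which lets me work entirely with acyclicity of $\hb$ on the extended trace rather than exhibiting an explicit program run.

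For the forward direction, I would fix a satisfying assignment $f$ and construct the missing $\eo$ and $\mo$ edges witnessing consistency. The $\mo$ edges in all Stage~1 handlers other than $h_W$ are essentially forced: post sequences are processed column-by-column in order of clauses $C_1,\ldots,C_m$, and within a column top-to-bottom through the rows, using the queue semantics induced by the nesting described in Section~\ref{sub:stage1}. The only freedom lies in the $\mo$ order of $m_{i,j,0}$ versus $m_{i,j,1}$ in $h_W$; for every variable $x_i$ I place $m_{i,j,1}$ after $m_{i,j,0}$ in $h_W$ iff $f(x_i)=\true$, uniformly across all clauses $j$ containing $x_i$. Then the $\eo$ edges in $h_W$ are determined by $\mo$ via $\dto=\pb^{-1}.\mo.\pb$. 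I would next execute the clause gadgets sequentially $G_1,G_2,\ldots,G_m$, each one sandwiched between the appropriate messages of $h_W$. The key calculation is that for each satisfied clause $C_j$, at least one literal is true under $f$, so the corresponding ``red'' $\hb$ arrow in $G_j$ (cf.\ Figure~\ref{app:fig:clauseGadget}) is reversed, breaking the potential cycle through $z_j$. I would then argue that the union $\hb$ of all relations remains acyclic: cycles across stages are impossible since Stage~2 events occur strictly after all necessary Stage~1 posts, and cycles within a gadget $G_j$ are precisely what satisfaction of $C_j$ avoids.

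For the backward direction, I would take any consistent extension $\tau'$ and extract an assignment $f$ from the $\eo$ order of messages in $h_W$: set $f(x_i)=\true$ iff $m_{i,j,0} \; \eo \; m_{i,j,1}$ in $h_W$ for the clause $C_j$ in which $x_i$ first appears. The crucial intermediate step is to show that this choice is well-defined, i.e., all copies indexed by $j$ agree. This follows from the nested posting structure: the post sequences for a variable $x_i$ and its negation pass through the same auxiliary handlers $h_{t_v}$ (or through $h_{t_{v+3}}$ and then $h_{t_v}$ on the first occurrence), so the FIFO semantics of the intermediate queues together with the inherited $\mo$ and $\eo$ edges forces all copies $m_{i,j,b}$ to be enqueued in $h_W$ in the same relative order across different $j$'s. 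With $f$ well-defined, I would assume for contradiction that some clause $C_j$ is unsatisfied by $f$; then every literal of $C_j$ has the falsifying $\eo$-direction in $h_W$, every sandwiched box forces the corresponding red $\hb$ arrow of $G_j$, and together with the $\rf$ edge from $e_{14}$ to $e_1$ on variable $z_j$ this closes a $\hb$ cycle inside $G_j$, contradicting axiomatic consistency of $\tau'$.

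The main obstacle is the well-definedness argument in the backward direction: showing that the intricate post sequences of Stage~1 really do propagate the same relative order between $m_{i,j,0}$ and $m_{i,j,1}$ through all clauses containing $x_i$, despite involving different auxiliary handlers $h_{t_v}$ and different nesting depths depending on whether $u=1$. I would handle this by an induction on the column index $j$, showing that after processing column $C_j$ the multiset of pending inner messages in each handler $h_{t_k}$ preserves the invariant that, for every variable $x_i$ still to be posted, the relative order of its $0$- and $1$-copies across all remaining clauses is consistent with the order already fixed in $h_W$. The column-by-column execution enforced by the $p_k^l$ segments is what makes this invariant maintainable; the case split on $u=1$ versus $u\neq 1$ in the construction is exactly what is needed to route the first occurrence through a distinct handler so that later occurrences can synchronise with it.
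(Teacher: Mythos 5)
Your proposal is correct and follows essentially the same route as the paper's proof: the falsified-clause cycle through the sandwiching arrows and the $z_j$ read-from edge for one direction, and the column-by-column (phase-wise) construction of the post order with the first-occurrence/identical-suffix synchronisation argument for the other. The only cosmetic difference is that you state the two implications directly (and make the well-definedness of the extracted assignment an explicit induction on columns) where the paper argues the first direction by contraposition and leaves that synchronisation step more informal.
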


\begin{proof}
     Note that in order to obtain a consistent trace, we will have to extend our constructed trace by specifying the $\eo$ and $\mo$ relations. These will induce a happens-before relation $\hb$ which needs to be acyclic. \\
\noindent\underline{$\phi$ is unsatisfiable.} For this direction, it suffices to look at the $\eo$ edges. Since $\phi$ is unsatisfiable, every assignment made to the variables must set some clause $C_j$ to false. In our example, consider the setting when $C_2=x_1  \vee x_2  \vee \overline{x_n}$ is false i.e. each of $x_1,x_2$ is set to false and $x_n$ is set to true. Recall that setting $x_n$ to true means that the message $\overline{m_{n,2,0}}$ has an $\eo$ edge leading to $m_{n,2,1}$. We have the following chain of relations: $e_{15} \; \rf \; e_{10} \; \po \; e_{11} \; \rf \; e_{16} \; \eo \; e_{17} \; \rf \; e_{12}$. This implies $e_{11} \xrightarrow{\hb} e_{12}$. Similarly we also have $e_{7} \xrightarrow{\hb} e_{8}$ and $e_{3} \xrightarrow{\hb} e_{4}$. Together with the existing relations, this creates a $\hb$ cycle, implying that this assignment of $\eo$ edges is inconsistent. Since we can find such a cycle for every assignment, we conclude that our constructed trace $\tau$ is inconsistent.   

\noindent\underline{$\phi$ is satisfiable.} Instead of specifying $\eo$ and $\mo$ separately, it is easier to specify a total order $\hb'$ on the set of all posts. 
 This clearly induces both the $\eo$ and the $\mo$ relations in a unique way. Intuitively, the $\hb'$ order executes parts of the post sequences in phases corresponding to the columns, with phase $m$ corresponding to the parts in column $m$. We complete all the posts in phase $m$ before moving on to phase $m+1$. 

  Let $P_\ell$ be the set of all posts made by handler $h_\ell$. We will assign the $\hb'$ edges in the following way:
 First we consider $P_V$. Moving column wise through the post sequences as in Figure \ref{app:fig:varsAndClauses} with $2n$ rows and $m$ columns, let $\tuple{p,q}$ refer to the cell in the $p$-th row and $q$-th column. Then we have the following total order on cells: $\tuple{p,q} < \tuple{p',q'}$ iff either $q < q'$ or ($q=q'$ and $p<p'$). Together with the nesting order of posts inside each cell, this is the $\hb'$ order on posts in $P_V$. The posts which remain are those in $P'= \bigcup_{k=1}^6 P_{t_k}$ which are all contained in the marked cells (note that $h_W$ does not make any posts and hence $P_W=\emptyset$). The $\hb'$ order between posts in $P_V$ and $P'$ is given by the nesting order. Looking at figure \ref{app:fig:msgInsertion}, an example of a post in $P'$ is $e_2$. Let $e,e'$ be posts in $P_V$ immediately before $e_1$ and immediately after $e_6$ in the post sequence $p^{x_n}$. Then we have $e \; \hb' \;  e_1  \; \hb' \; e_6 \; \hb' \; e'$.  
 It remains to choose the order between two posts made by different $h_{t_k}$ in $P'$ when choosing the assignment of a variable $x_i$. Note that there are 6 in total, we denote this set of posts $P_{x_i}$.

 In order to determine the order between two posts in $P_{x_i}$ we will look at what happens in Stage 2. We explain using the example in Figure \ref{app:fig:clauseGadget}. 

 Let $f$ be a satisfying assignment for $\phi$. Since $f$ satisfies $C_2$, this means that it  sets one of the literals in $C_2$ to true. Let us consider the case where $f(x_2)= f(x_n)= \true$ and $f(x_1)=\false$. This induces $e_{11} \xrightarrow{\hb} e_{12}$, $e_{7} \xrightarrow{\hb} e_{8}$ and $e_{4} \xrightarrow{\hb} e_{3}$. Overall we then get the following happens-before relation on the partial trace:\\
 $e_4 \xrightarrow{\hb} e_5 \xrightarrow{\hb} e_6 \xrightarrow{\hb} e_7 \xrightarrow{\hb} e_8 \xrightarrow{\hb} e_9 \xrightarrow{\hb} e_{10} \xrightarrow{\hb} e_{11} \xrightarrow{\hb} e_{12} \xrightarrow{\hb} e_{13} \xrightarrow{\hb} e_{14} \xrightarrow{\hb} e_1 \xrightarrow{\hb} e_2 \xrightarrow{\hb} e_3$.
 In this case, we already have a total ordering on the events. In the case where more than one literal of a clause is set to true by $f$, we will get a partial order for the $\hb$ relation. In all cases, we can choose a total order $<_2$ which extends this partial order. 
 
 Let us consider our example clause $C_2$ with assignment $f$. Since $x_1=\true$ and this is the first occurrence of  $x_1$ in any clause, the post sequences containing  $m_{1,2,0}$ and $m_{1,2,1}$ will first be sent to  $h_{t_1}$ and $h_{t_4}$ respectively. After this, they are both sent to $h_{t_1}$ from where they post the messages to $h_W$. This means that depending on which of the two post sequences is posted to $h_{t_1}$ just before they post to $h_W$, the value of $x_1$ is assigned. In the example since $x_1= \true$, this means that $m_{1,2,1}$ will be posted later than $m_{1,2,0}$.  After this, they are both sent back to $h_V$ in the same order in which they appeared in $h_{t_1}$ for the last time. The message sequences corresponding to $x_1$ and $\overline{x_1}$ still have two more messages to be posted to $h_W$. Since the remainder of the post sequences $p^{x_1}$ and $p^{\overline{x_1}}$ is identical, the order chosen between them is retained and the assignment to $x_1$ by the 4 other messages is consistent with the original choice. For example, $x_2$ is occurring for the second time when it appears in $C_2$. This means that the order between the messages  $m_{2,2,0}$ and $m_{2,2,1}$ has already been chosen during the posting of $m_{2,1,0}$ and $m_{2,1,1}$ executed previously. When  $m_{2,2,0}$ and $m_{2,2,1}$ are to be posted to $h_W$, the post sequence sends both to $h_{t_2}$, maintaining the prior order. 

 In general, the posts within $\bigcup_k h_{t_k}$ are completed before sending back to $h_V$ for each marked cell, and the execution continues with the sending of the next row with unmarked cell to the back of the queue in $h_V$. In this manner, all posts in phase $m$ are completed and we proceed to phase $m+1$.

 When all post events have been executed for all phases, the configuration consists of empty queues in all but the handler $h_W$, where the messages $m_{i,j,b}$ are placed in the order $<_2$ dictated by $f$. We can now execute the clause checking in Stage 2 i.e. the events in the handlers $h_{C_a}, h_{C_b}, h_{C_c}, h_{C_d}$ which are sandwiched with the writes of the messages in $h_W$. This completes the proof of correctness.   
\end{proof}

\section{Proof of Theorem \ref{thm:boundedHandlernoNestingQinP}}
\label{prf:thm:boundedHandlernoNestingQinP}
Consider a trace $\tau$ with $k$ handlers and with no nesting of posts. This implies that there is an initial message $m_i$ in each handler $h_i$ $1 \leq i \leq k$ such that all $\post$ events posted by $h_i$ occur in $m_i$. The set of all post instructions from a handler is totally ordered by the $\po$ relation since they are in the initial message. This implies a total $\mo$ order $\mo_{i,j}$ on the set $P(i,j)$ of all posts made by $h_i$ to $h_j$. Hence the trace already specifies $k$ many total orders on the set of all posts made \emph{to} a handler $h_j$. Due to queue semantics, this translates to $k$ many total orders on the messages corresponding to these posts. Note that the initial message $m_j$ occurs before each of the posted messages in $h_j$. Let $M_{i,j}=m_{i,j,1} \;\eo\; m_{i,j,2} \;\eo\; \ldots m_{i,j,l}$ be the set of messages corresponding to the posts $P_{i,j}$. Note that each message consists of sequence of events $e_1 \;\po\;  \ldots \;\po\; e_o$. 

We define a configuration $C$ as containing for each handler $h_j$:\\
(1) A pointer $s_{i,j}$ denoting an element of $M_{i,j}$ (or "start" if the first message in $M_{i,j}$ has not yet started, or "end" if all messages in $M_{i,j}$ have completed executing) for each $i$, and \\
(2) A pointer $r_j$ to an event which is either in the initial message or one of the messages in $M_{i,j}$ or "term".\\
The pointers in (1) indicate which messages have been executed, while (2) indicates the program pointer of the currently executing message in $h_j$.
 Note that if $r_j$ points to an event in the initial message $m_j$, then the pointer in (1) is set to "start".   

Each pointer can be stored in space $O(\log(n))$, hence the total amount of space required is $O(k^2\log(n))$. This implies that the total number of configurations is polynomial in $n$ and exponential in $k$. Let $\mathcal{C}$ denote the set of all configurations. We create a graph $G=(\mathcal{C},\mathcal{E})$ where the set of edges $\mathcal{E}$ is determined as follows:\\
Consider two configurations $C=(\{s_{i,j} \mid 1 \leq i,j  \leq k \}, s)$ and $C'=(\{s'_{i,j} \mid 1 \leq i,j  \leq k \}, s')$. There is an edge between $C$ and $C'$ iff one of the following holds:
Either $s'_{i,j}=s_{i,j}$ for all $j$, $r_j=r'_j$ for all $j$ except for a unique handler $h_q$ for which $r'_j$ points to the event which is the successor of $r_j$ under the  $\po$ order,\\
or there exists $j_0$ such that $r_j$ points to the last event of a message, $s'_{i,j}=s_{i,j}$ for all $j \neq j_0$, and $r'_j$ points to the first event of the successor message of $s_{i,j}$ under the $\eo$ order (or to "end" if all messages in $h_j$ have been executed). In the case where the initial message is finishing execution, the pointers for $s_{i,j}$ are all set to the first message in each of the $M_{i,j}$.  

The initial node $v_0$ of the graph sets all pointers $s_{i,j}$ to "start" and $r_j$ to the first event of every initial message. The last node $v_f$ sets all pointers $s_{i,j}$ to "end" and $r_j$ to "term". The trace is consistent iff there is a path from $v_0$ to $v_f$ in the graph.

\section{Appendix for Section~\ref{sec:exp}}~\label{app:experiments}

\subsection*{Experimental Results for Traces generated via \Nidhugg}

We used the open-source model checker \Nidhugg to generate traces from event-driven programs. While \Nidhugg supports an event-driven execution model, it currently interprets asynchronous semantics using multisets rather than FIFO queues. As a result, some of the generated traces may be inconsistent under queue semantics. For each benchmark program, we randomly sampled traces from Nidhugg's output. These traces are guaranteed to satisfy multiset semantics, but may violate stricter queue-based consistency.


\begin{table}[h]
	\vspace{-\baselineskip}
	\centering
	\resizebox{\textwidth}{!}{
		\vspace{-\baselineskip}
\begin{tabular}{| l | c | c  | c | c | c | c | c | c | c | c |}
\hline
  \multirow{3}{*}{\textbf{Benchmark}} &
\multirow{3}{*}{\# E} &
\multirow{3}{*}{\# M} &
\multirow{3}{*}{\# H} &
\multirow{3}{*}{\# T} &
\multicolumn{3}{c|}{Algorithm 1} & \multicolumn{3}{c|}{Algorithm 2} \\
& & & & & \# Consistent & \# T/O & Time & \# Consistent & \# T/O & Time \\
& & & & & traces & traces & in sec. & traces & traces & in sec. \\
	\hline
Buyers (2) & 88 & 6 & 2 & 2 & 2 & 0 & 0.0071 & 2 & 0 & 0.0295 \\
Buyers (4) & 166 & 10 & 2 & 10 & 10 & 0 & 14.3448 & 10 & 0 & 0.0358 \\
Buyers (8) & 322 & 18 & 2 & 10 & 0 & 10 & - & 10 & 0 & 0.0603 \\
ChangRoberts (2) & 207 & 11 & 3 & 2 & 2 & 0 & 0.0930 & 2 & 0 & 0.0356 \\
ChangRoberts (4) & 353 & 17 & 5 & 10 & 4 & 6 & 36.1441 & 10 & 0 & 0.0432 \\
ChangRoberts (8) & 737 & 33 & 9 & 10 & 0 & 10 & - & 10 & 0 & 0.0687 \\
Consensus (2) & 221 & 9 & 3 & 4 & 2 & 0 & 0.1624 & 2 & 0 & 0.0338 \\
Consensus (4) & 677 & 25 & 5 & 10 & 0 & 10 & - & 1 & 0 & 0.0702 \\
Consensus (8) & 2333 & 81 & 9 & 10 & 0 & 10 & - & 1 & 0 & 2.3321 \\
Counting (2) & 129 & 9 & 3 & 4 & 3 & 0 & 0.0269 & 4 & 0 & 0.0329 \\
Counting (4) & 443 & 25 & 5 & 10 & 0 & 10 & - & 10 & 0 & 0.0596 \\
Counting (8) & 1647 & 81 & 9 & 10 & 0 & 10 & - & 10 & 0 & 1.1833 \\
MessageLoop (2) & 268 & 23 & 3 & 10 & 1 & 0 & 0.3115 & 1 & 0 & 0.1172 \\
MessageLoop (4) & 954 & 73 & 5 & 10 & 1 & 0 & 72.2307 & 1 & 0 & 4.7470 \\
MessageLoop (8) & 3670 & 269 & 9 & 10 & 0 & 10 & - & 0 & 10 & - \\
SparseMat (2) & 231 & 7 & 3 & 7 & 7 & 0 & 0.1507 & 7 & 0 & 0.0375 \\
SparseMat (4) & 427 & 11 & 3 & 10 & 6 & 4 & 64.3954 & 10 & 0 & 0.0485 \\
SparseMat (8) & 819 & 19 & 3 & 10 & 0 & 10 & - & 10 & 0 & 0.1141 \\
	\hline
	\end{tabular}
	}
	\caption{\footnotesize{Experimental results for benchmark programs collected from droidracer. The field \# T denotes the number of traces. The traces can differ in size (events \# E), messages \# M, handlers \# H), and the field contains the maximum of its traces. The field \# Consistent traces denotes the number of these traces for which the implementation reports the existence of a satisfying execution. The field \# T/O traces denotes the number of traces for which our tool timed out (with a timeout of 120s). For any remaining traces, the tool concludes inconsistency. The time fields represent the average runtime for the traces that did not time out. A value of - indicates that the corresponding algorithm timed out on every trace.}
	}
	\label{tab:nidhugg_results}
	\vspace{-\baselineskip}
\end{table}

\subsection*{Benchmarks}

\noindent We consider standard benchmark programs from prior works on event-driven programs~\cite{Kragl20}.
To demonstrate the subtleties of our setting, we have modified these examples to allow for multiple messages being posted to the same handler. 

\begin{itemize}
    \item \textbf{Consensus} is an example taken from Kragl et al.~\cite{Kragl20}. Here, the protocol outlines a straightforward broadcast consensus approach where $n$ nodes aim to reach agreement on a single value. Each node, labeled $i$, spawns two threads: one responsible for broadcasting, which sends the node's value to all other nodes, and another serving as an event handler that runs a collection process. This collection process gathers $n$ values and decides on the maximum, storing it as the node's final decision. As each node receives inputs from every other node, by the protocol's conclusion, all nodes converge on the same agreed-upon value.
    \item \textbf{Buyer}. This example involves $n$ buyer processes working together to purchase an item from a seller. Initially, one buyer obtains a price quote from the seller. The buyers then coordinate their individual contributions, and if their combined contributions meet or exceed the cost of the item, an order is placed. This benchmark is adapted from~\cite{Castro-pldi19}. 

    \item \textbf{Sparse-matrix} is a program that determines the count of non-zero elements in a sparse matrix with dimensions $m \times n$. The computation is carried out by splitting it into $n$ separate tasks, each sent as a message to different handlers. These handlers process their respective portions and then aggregate the results, ultimately combining the outputs from each task to produce the final count.
    \item \textbf{Message-loop} is a synthetic benchmark in which there are $n$ handlers. Whenever the $k$th handler receives a message, it increments a global message counter, and sends a new message to the $k+1$th handler, looping back from $n$ to $1$. Each such chain passes every handler $n$ times (for a total of $n^{2}$ messages). Initially, two chains are started; each at the first handler.
    \item \textbf{Counting} is a synthetic benchmark. Initially, each of $n$ handlers send a message to each other handler, and finally a message to itself. When a handler handles a message from a different handler $j$, it writes to a variable that it most recently received a message from handler $j$. This gets overwritten if it then handles a message from handler $k$. If a handler receives the message from itself, it reads the value of the shared variable, to see from which handler it has gotten the current value.   
\end{itemize}

\end{document}